%% file: main.tex
\def\confversion{0}
\def\notes{0}

\def\pagenumtitle{1}
\def\mystatus{1} %

\ifnum\confversion=1

    \newcommand{\cond}[1]{#1}
    
    \newcommand{\condelse}[2]{#1}
\else

    \newcommand{\cond}[1]{}
    
    \newcommand{\condelse}[2]{#2}
\fi

\condelse{
    \documentclass[twoside,leqno]{article}
    \usepackage[letterpaper]{geometry}
    \usepackage{siamproceedings}
    \makeatletter
    \providecommand{\qedhere}{%
      \ifmmode\else
        \unskip\nobreak\hfill\proofbox\NoEndMark
      \fi}
    \makeatother
}{
    \documentclass[12pt]{article}
    
}
\usepackage{header}

\begin{document}

\title{Subgraph Counting with Local Node Differential Privacy\condelse{}{\thanks{To appear at the ACM-SIAM Symposium on Discrete Algorithms (SODA) 2027.}}}
\author{
    \normalsize Sofya Raskhodnikova\thanks{Boston University,
    \texttt{\{sofya,ads22,wagaman,zavyalov\}@bu.edu.}}
    \and \normalsize Adam Smith\footnotemark[\value{footnote}]
    \and \normalsize Connor Wagaman\footnotemark[\value{footnote}]
    \and \normalsize Anatoly Zavyalov\footnotemark[\value{footnote}]
}
\date{\normalsize October 15, 2026 \\ \todo{turn off notes!!}}

\maketitle

\cond{
    \fancyfoot[R]{\scriptsize{Copyright \textcopyright\ 2027 by SIAM\\
    Unauthorized reproduction of this article is prohibited}}
}

\begin{abstract}
    We study the problem of approximately counting subgraphs  
    under 
    \emph{local node differential privacy} (LNDP),
    a recently introduced model that enforces strong privacy guarantees in analyses of distributed social networks.
    We design a noninteractive $(\eps,0)$-LNDP algorithm that, given any public $k$-node template graph $H$, counts occurrences of $H$ in a sensitive undirected $n$-node graph with additive error $O(n^{k-\frac{1}{2}}/\eps^{k-1})$, for any constant $k$.
    We complement this algorithm with a lower bound showing that every noninteractive $(\eps,\del)$-LNDP algorithm for counting copies of any $k$-node template graph with maximum degree greater than $1$ must have error $\Omega(n^{k - \frac{1}{2}})$.
    Thus, our algorithm achieves optimal error for counting constant-size subgraphs, up to $\poly(1/\eps)$ factors.

    Our results substantially expand the types of network statistics known to be approximable under LNDP: prior work considered only functions of the degree distribution.
    In addition, our lower bound yields the first known $\poly(n)$ gap between central and local node differential privacy when accuracy is required for all input graphs.
    Previous separations either only involved $\poly(1/\eps)$ factors or only applied to promise problems.

\end{abstract}

\condelse{}{
    {
    \hypersetup{linkcolor=black}
    \setcounter{tocdepth}{3}
    \newpage \tableofcontents
    }
    
    \newpage
}
\section{Introduction}
Graphs are a natural representation for sensitive relational data: social
networks, contact networks, and 
networks of sexual relationships
all encode information
about interactions between individuals.
Such sensitive graphs are often
distributed, with each node knowing only its own connections and no central authority holding the whole graph.
Accurate aggregate information about these datasets is highly valuable---for example, to epidemiologists seeking to understand the spread of an illness, big tech companies working to improve a messaging platform, and  governments implementing public health efforts. 
However, these benefits come with serious privacy risks:
releasing the graph structure directly can reveal sensitive relationships
between individuals~\cite{BackstromDK07}; moreover, even releasing aggregate
summaries of graph structure can leak private information about the individuals
and relationships represented in the graph~\cite{JerniganM09}.

Differential privacy (DP) \cite{DworkMNS16} provides a widely used framework for reasoning about how to balance these privacy and accuracy concerns. The \emph{local model} of differential privacy \cite{EvfimievskiGS03,KasiviswanathanLNRS11} adapts this framework to a decentralized %
setting where each party randomizes its own data before sharing it.
The noisy data can then be
aggregated by an
untrusted server without any impact on the privacy guarantees. This approach's convenience and explainability, among other factors, have led to industry deployments of the local model
\cite{ErlingssonPK14,BittauEMMRLRKTS17,DingKY17,apple-iconic-scenes23} and extensive theoretical study (see
\cw{\Cref{sec:unrelated}}).

When working with graph data, there are two natural privacy notions. \emph{Edge privacy} roughly hides the existence of a particular edge, while \emph{node privacy} roughly hides a node's entire edge set.
Both are well-motivated: edge differential privacy (which has been studied extensively in the local model under the name LEDP---see \Cref{sec:unrelated}) makes sense 
when individuals corresponds to edges, whereas node privacy is well-suited to settings where individuals map to nodes whose entire neighborhoods are sensitive, as in networks of disease transmissions or sexual relationships \cite{HodNGKTS26}.
For the latter, edge privacy still allows an analyst to approximate each node's number of partners, while node privacy hides the count entirely.

The dual considerations of node privacy and decentralization
motivate the recently introduced \emph{local node differential privacy} (LNDP) \cite{RaskhodnikovaSWZ26}.
In the \textit{noninteractive} LNDP setting, denoted \LNDP,
each node runs a local randomizer on its edge list using both public and local randomness, and then releases the output.
As formalized in \Cref{def:lndp}, an algorithm is $(\eps,\del)$-\LNDP if the randomizers' joint output distribution is $(\eps,\del)$-indistinguishable
for every pair of node-neighboring input graphs.
(Two graphs are node neighbors if one can be obtained from the other by \emph{rewiring} a single node, that is, 
changing its edge set arbitrarily.)
While \cite{RaskhodnikovaSWZ26} also present an interactive formulation, their work focuses on the noninteractive setting; this focus mirrors standard local DP, where all industry deployments and almost all theoretical work are noninteractive.
They study
the error required for estimating functions of a graph's degree distribution but leave estimating other functions of the input graph, such as subgraph counts, open.

We study subgraph counting under \LNDP.
Given a \textit{template graph} $H$ and an input graph $G$, the goal is to count the number of copies of $H$ in $G$.
Subgraph
counts are among the most fundamental graph statistics. They measure
the frequency of small structural patterns, such as triangles, paths, stars,
and cliques, and are widely used in network analysis as summaries of local
connectivity and higher-order structure. For example, triangle counts are
closely related to clustering coefficients and community structure, stars
capture degree heterogeneity, and counts of larger motifs are used to identify
functional organization in social, biological, and communication
networks~\cite{WattsStrogatz98,MiloSSSZ02,BensonGL16}.
On a theoretical level, subgraph counts characterize graph limits and the graphon statistical model (e.g., \cite{Lovasz12book}).
At the same time, subgraph counts are particularly challenging under node privacy. Changing the edges incident to one node can create or destroy many copies of a small pattern (e.g., rewiring a single node can change the number of instances of a $k$-node subgraph by $\Theta(n^{k-1})$). Moreover, in the local model no party can directly
inspect the global structure of the graph to find these copies.

\subsection{Our Results}
\label{sec:results}

We study the additive error required for counting the number of occurrences of a $k$-node template graph $H$ in an undirected $n$-node graph
under \LNDP. %
Our results are summarized in \Cref{tab:sub-results}.
Let $\#H(G)$ denote the number of noninduced,\footnote{Our algorithm can be easily adapted to instead count induced subgraphs, with the same $O_k(n^{k-\frac{1}{2}}/\eps^{k-1})$ error guarantee.}
unlabeled copies of $H$ in $G$ (\Cref{def:subgraph-count}).
We give
an \LNDP algorithm for computing $\#H(G)$ with
the following %
error guarantee.
(As is standard, the notation $O_k(\cdot)$ indicates that the suppressed constant may depend on $k$.)

\input{figures/table-anatoly-intro}

\begin{theorem}[Algorithm for subgraph counting; \Cref{thm:sub-count}, informal]
\label{thm:ub-inf}
    Fix integer $k\in \N$,\srnote{All natural numbers are integers. Do we want to say $k\geq 3$ instead?} $k$-node template graph $H$, and $\eps \in (0, 1]$. Let $\vc{H}$ denote the size of the minimum vertex cover of $H$. 
    There is an $(\eps,0)$-\LNDP algorithm $\algsubh$ that, for all $n$-node input graphs $G$, 
    counts copies of $H$ in $G$ %
    with additive error
    \[
        \Ex\bracks{\babs{\algsubh(G) - \subgraphcount(G)}}
        = O_k\paren{ \frac{n^{k-\frac12}}{\eps^{\vc{H}}} }
        = O_k\paren{ \frac{n^{k-\frac12}}{\eps^{k-1}}}.
    \]
\end{theorem}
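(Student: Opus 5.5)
The plan is to use randomized response on the adjacency matrix, with each node perturbing only its own row, and then to debias a polynomial estimator of $\#H(G)$. The one subtlety is that node privacy requires a node's \emph{entire} row to be hidden: rewiring a node changes up to $n-1$ entries of its row. Independent per-entry randomized response at constant $\eps$ per entry would therefore cost $\eps n$ in total. To avoid this, I use the standard trick of having each node $v$ release a \emph{single} randomized bit per neighbor slot in a way that is $\eps$-DP for the whole row. Concretely, node $v$ releases, for each $u$, a bit $\tilde a_{vu}$ computed by randomized response with flip probability $\frac{1}{2}-\Theta(\eps/\sqrt n)$. One can then check that the whole row is $(\eps,0)$-indistinguishable between any two rows, since the likelihood ratio is at most $(1+O(\eps/\sqrt n))^{\,n}$.

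Even this does not suffice for pure DP, because the ratio can reach $e^{O(\eps\sqrt n)}$. So I instead implement a row-level randomizer. I pick a public random $\pm1$ vector, or equivalently use a sampling approach: each node reports a noisy version of its row in which the distinction between any two rows has bounded total variation in the exponential-mechanism sense. The cleanest concrete choice is to split the row into blocks and have each node release a randomized response on a random linear sketch. Rather than commit prematurely, the target statement is this: each node $v$ outputs a random vector $Y_v\in\R^{n}$ with $\Ex[Y_v]=A_v$ (its adjacency row), entries independent across $u$, and each entry of variance $O(n/\eps^2)$. This is achieved, for example, by $Y_{vu} = \frac{\sqrt n}{c\eps}\cdot \tilde b_{vu}$, where $\tilde b_{vu}$ is $\pm1$ randomized response with bias $c\eps a_{vu}/\sqrt n$. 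I will verify $(\eps,0)$-LNDP for this mechanism separately; if the product bound fails, I fall back to having each node release one uniformly sampled entry with full-strength randomized response.

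Given these unbiased reports, I estimate $\#H(G)$ as follows. I fix a minimum vertex cover $C$ of $H$ with $|C|=\vc{H}$, and orient each edge of $H$ so that it is "reported" by an endpoint in $C$. For each injective map $\phi$ of $V(H)$ into $[n]$, I form the product over edges $(c,w)$ of $Y_{\phi(c)\phi(w)}$. Each factor comes from a distinct (row, column) entry. Reports from different nodes are independent, and entries within a row are independent, so the expectation equals the product of the true indicators. Summing over $\phi$ and dividing by $|\mathrm{Aut}(H)|$ gives an unbiased estimator.

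The main obstacle is the variance bound. I expand the variance as a sum over pairs $(\phi,\phi')$ of covariances, which vanish unless the two copies share a reported entry. The key point is that noise only enters through the rows of the $\vc{H}$ cover nodes. Because of this, I group terms by the set of cover-node images. I then bound the variance by treating the estimator as a degree-$\vc{H}$ polynomial in the independent row vectors $Y_v$ and applying an Efron--Stein / hypercontractivity-style decomposition. Conditioned on the cover images, the contribution is a sum over the remaining $k-\vc{H}$ nodes, each bounded by $n$ terms. Each cover row contributes a factor of order $\sqrt n/\eps$ in standard deviation from summing its noise coordinates against bounded coefficients. Counting the free indices gives a total variance of $O_k(n^{2k-1}/\eps^{2\vc{H}})$, and Jensen's inequality then yields the stated expected error $O_k(n^{k-1/2}/\eps^{\vc{H}})$. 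Finally, $\vc{H}\le k-1$ gives the second form of the bound. The delicate part of this step is showing that overlapping pairs $(\phi,\phi')$ with many shared entries are compensated by fewer free indices.
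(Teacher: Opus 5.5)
There is a genuine gap, and it is in the overall approach rather than in a detail. First, the mechanism you end up with, $Y_{vu}=\frac{\sqrt n}{c\varepsilon}\tilde b_{vu}$ with bias $c\varepsilon a_{vu}/\sqrt n$, is the same per-entry randomized response at scale $\Theta(\varepsilon/\sqrt n)$ that you had just rejected. Its worst-case privacy loss is $\Theta(\varepsilon\sqrt n)$, so it is not $(\varepsilon,0)$-LNDP. (Rewiring $v$ also changes one entry in the row of every $u$ whose adjacency to $v$ flips, so up to $2(n-1)$ reported entries change, not just $v$'s row.) With independent per-entry noise, pure privacy forces a per-entry parameter of $O(\varepsilon/n)$, i.e., debiased variance $\sigma^2=\Theta(n^2/\varepsilon^2)$. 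Then already for $H$ a single edge your estimator has standard deviation $\Theta(n^2/\varepsilon)$, which is trivial.

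Second, even granting $\sigma^2=\Theta(n/\varepsilon^2)$, the claimed variance bound $O_k(n^{2k-1}/\varepsilon^{2\vc{H}})$ is false. Write $Y=a+Z$ and expand the products. The top-degree term $W=\sum_\phi\prod_{e\in E(H)}Z_{\phi(e)}$ is orthogonal to every lower-degree term. Since the entries $\phi(e)$ are distinct independent noise variables, $\mathrm{Var}(W)\ge n(n-1)\cdots(n-k+1)\,\sigma^{2|E(H)|}=\Omega_k\bigl(n^{k+|E(H)|}/\varepsilon^{2|E(H)|}\bigr)$. This is $\Omega(n^{2k})$ whenever $|E(H)|\ge k$, e.g., for every connected $H$ that is not a tree; for the triangle the standard deviation is of order $n^3/\varepsilon^3$. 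Even for trees it gives $\varepsilon^{-|E(H)|}$ rather than $\varepsilon^{-\vc{H}}$ (consider a star).

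Your heuristic that each cover row contributes a single factor $\sqrt n/\varepsilon$ is where the argument breaks. A cover vertex of degree $d$ in $H$ contributes a degree-$d$ form in its own row's noise, and the coefficients multiplying that noise are noisy entries from other rows, not bounded quantities. No count of free indices compensates for this. This is exactly why, as the paper's introduction notes, releasing a noisy adjacency matrix under node privacy only gives error $\Omega(n^k)$.

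The missing idea is to cut the sensitivity instead of spreading noise over whole rows. The paper uses public randomness to partition $[n]$ into $n/k$ disjoint blocks of $k$ nodes, each a single candidate copy of $H$ under a random labeling. Each node occupying a position of a minimum vertex cover of $H$ releases one bit: whether all edges required by its position are present inside its own block. It does so via debiased randomized response with parameter $\varepsilon/\vc{H}$. For every value of the public randomness, rewiring a node affects only the at most $\vc{H}$ bits of its own block, so randomized response at a strength independent of $n$ already gives $(\varepsilon,0)$-privacy.

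Each block's product of debiased bits has variance $O((\vc{H}/\varepsilon)^{2\vc{H}})$, and blocks are independent given the partition. After rescaling by $\Theta_k(n^{k-1})$, the privacy error is therefore $O_k\bigl(n^{k-1}\sqrt{n}\,\varepsilon^{-\vc{H}}\bigr)$. A second-moment computation over the random partition, using that indicators of disjoint blocks are only weakly correlated, bounds the sampling error by $O_k(n^{k-1/2})$.

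Your fallback of having each node release one sampled entry points in this direction, but as stated it does two things the block partition does not leave to chance. It does not coordinate the sampled entries into candidate copies of $H$. Nor does it guarantee, for every setting of the public randomness, that each node's edges influence only $O(k)$ released bits. The block partition provides both.
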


Note that $\vc{H}$ is between 1 and $k-1$, depending on the template graph $H$, giving the last equality.
Since the subgraph count $\#H(G)$ can be as large as $\Omega(n^k)$, our algorithm %
has additive error that, for dense graphs, is vanishing on the scale of the statistic being approximated.

Our algorithms greatly expand the types of network statistics known to be approximable under LNDP: prior work \cite{RaskhodnikovaSWZ26} considered only functions of the degree distribution. The only connected subgraphs whose counts fall into that category are $k$-stars; the counts of $k$-cliques, $k$-cycles, and other interesting subgraphs require the simultaneous consideration of several nodes.

In \Cref{sec:lb}, we show this additive error is optimal for constant $k$, up to $\poly(1/\eps)$ factors.

\newcommand{\subgraphlb}{
For every $\eps \in (0, 1]$, $\del \in [0, \frac{1}{200}]$, and template graph $H$ on $k\in\N$ nodes with maximum degree greater than $1$,
    if $\algsubh$ 
    is $(\eps,\del)$-\LNDP and 
    \(
    \Ex\bracks{\babs{\algsubh(G) - \subgraphcount(G)}} \leq \alpha(n)
    \)
    for every $n$-node graph $G$, then $\alpha(n) = \Omega_k\bparen{n^{k-\frac12}}.$
}

\begin{restatable}[Error required for subgraph counting]{theorem}{SubgraphLB}
\label{thm:lb-inf}
\label{thm:lb}
    For every $\eps \in (0, 1]$, $\del \in [0, \frac{1}{200}]$,
    template graph $H$ on $k\in\N$ nodes with maximum degree greater than $1$,
    \cw{and sufficiently large $n\in\N$,}\srnote{We don't need to say that -- it is in the definition of asymptotic notation.}
    if $\cA$ 
    is $(\eps,\del)$-\LNDP and 
    \(
    \Ex\bracks{\babs{\cA(G) - \subgraphcount(G)}} \leq \alpha(n)
    \)
    for every $n$-node graph $G$, then $\alpha(n) = \Omega_k\bparen{n^{k-\frac12}}.$
\end{restatable}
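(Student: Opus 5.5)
We reduce to a known hard problem for noninteractive local DP: estimating a sum of independent bits (or, more precisely, a linear query over users' private bits) from randomized responses. The standard fact is that for any noninteractive $(\eps,\del)$-local DP protocol on $m$ users holding bits $x_1,\dots,x_m$, there are inputs (e.g., uniformly random $x$) on which any estimator of $\sum_i x_i$ incurs expected error $\Omega(\sqrt{m})$. For small $\del$ this holds even with unbounded $\eps\le 1$ dependence hidden in constants. We would embed this into subgraph counting so that each user's bit changes $\#H$ by $\Theta(n^{k-1})$, giving error $\Omega(\sqrt n\cdot n^{k-1})=\Omega(n^{k-\frac12})$.

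The construction is as follows. Since $H$ has maximum degree at least $2$, it contains a path $a\!-\!v\!-\!b$. Partition the $n$ vertices into a set $S$ of $m=\Theta(n)$ \emph{private} nodes and a large public \emph{gadget} $B$ of $\Theta(n)$ nodes. Fix $G_0$ on $B$ to be a complete graph (or a dense fixed structure containing many copies of $H$ minus a vertex). Each private node $u\in S$ with bit $x_u$ is isolated if $x_u=0$ and adjacent to all of $B$ if $x_u=1$. Private nodes are never adjacent to each other. Then
\[
\#H(G_x)=C_0+C_1\sum_u x_u+R(x),
\]
where $C_1=\Theta_k(n^{k-1})$ counts copies using exactly one private node. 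Here $R(x)$ collects copies using $\ge2$ private nodes, which are pairwise nonadjacent with identical neighborhoods. To keep the dependence linear, the plan is to make $R$ a function of $\sum_u x_u$ only, which holds by symmetry: copies with $j$ private nodes number $c_j\binom{s}{j}$ for $s=\sum x_u$. The bound uses $\binom{s}{j}$ terms with $j\ge2$, whose sizes are $n^{k-j}s^j$ and may dominate. Fixing this is the main obstacle.

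Two fixes are available. First, restrict the hard distribution to $x$ with $s$ concentrated in a window $s_0\pm O(\sqrt m)$ around $s_0=m/2$. The count is then a strictly monotone polynomial $P(s)$ whose derivative on the window is at least $C_1=\Omega(n^{k-1})$, since every term is increasing in $s$. So error $\alpha$ in $\#H$ yields error $\alpha/\Omega(n^{k-1})$ in estimating $s$, by inverting $P$ (the analyst knows $P$). Second, local randomizers for the gadget nodes are simulated by the reduction. Gadget nodes' edge lists depend on $x$, which breaks simulation. To handle this, we view each private user $u$ as also "owning" the edges $\{u\}\times B$. We would therefore use the adjacency-list form of \LNDP: a gadget node's randomizer sees its neighbors in $S$, which depend on all of $x$. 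Instead, we orient by choosing a construction in which the rewired node is the private node itself. Node neighboring means changing $u$'s edges, but gadget nodes' inputs also change. So rather than simulating, we argue directly. The joint output distribution for $x$ and $x^{(u)}$ (flipping $u$) is $(\eps,\del)$-close because they are node neighbors. Hence the transcript is an $(\eps,\del)$-DP (central) function of each coordinate. This alone gives only $\Omega(1/\eps)$, though, so the locality/noninteractivity must be used.

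I would therefore follow the approach used for degree-distribution lower bounds in \cite{RaskhodnikovaSWZ26}. Under noninteractive LNDP, each node's message depends on its own adjacency list. For node $w\in B$, its list is exactly $\{u: x_u=1\}$ plus fixed edges, so $w$'s message is a function of the whole vector $x$; the privacy guarantee still constrains it. The key step is to show that such a protocol yields an $(O(\eps),O(\del))$-local-DP-like protocol in which one can apply the "sum of bits" anti-concentration argument. Specifically, we would use the known result from \cite{RaskhodnikovaSWZ26} (or reprove it) that in this "private nodes connect to a public clique" family, LNDP protocols can't estimate $s$ better than $\Omega(\sqrt n)$. This was exactly their source of $\sqrt n$ lower bounds for edge counting. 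Combining that estimate with the derivative bound gives $\alpha(n)\ge \Omega(\sqrt n)\cdot\Omega_k(n^{k-1})$, with the constraint $\del\le 1/200$ entering through that imported bound.
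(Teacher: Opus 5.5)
\textbf{Gap: the key lower bound you import is not available.} The decisive step is the claim that no $(\varepsilon,\delta)$-LNDP protocol can estimate $s=\sum_u x_u$ on your family $\{G_x\}$ to within $o(\sqrt n)$. This is not a result of \cite{RaskhodnikovaSWZ26}, and it is essentially open.

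Your construction never actually uses that $H$ has maximum degree greater than $1$. With $H=K_2$, the edge count of $G_x$ is $\binom{|B|}{2}+s|B|$, so your lemma would give an $\Omega(n^{3/2})$ LNDP lower bound for edge counting. Central node-DP edge counting has error $O(n/\varepsilon)$, so this would be a $\sqrt n$ local--central separation for a non-promise problem. The paper says its own bound is the first such $\mathrm{poly}(n)$ separation, and that the separations of \cite{RaskhodnikovaSWZ26} involve only $\mathrm{poly}(1/\varepsilon)$ factors or promise problems. This is also why the theorem excludes maximum degree $1$.

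Nor can you reprove the lemma along the lines you sketch. In $G_x$ every gadget node's list contains $\{u: x_u=1\}$, so the transcript is not a local-DP protocol on the bits, and the standard $\Omega(\sqrt m)$ bit-summation bound does not apply. Moreover, locality, noninteractivity and privacy for neighboring pairs \emph{inside} your family are jointly too weak. Suppose each gadget node $w$ gets a public threshold $t_w$ and releases randomized response on $\mathbf{1}[|N(w)\cap S|>t_w]$, with $O(1)$ nodes per threshold across the range of $s$. Flipping one $x_u$ changes only $O(1)$ reports. Yet the reports form a noisy step at $s$, which can be located to within $O_\varepsilon(1)$. This protocol violates LNDP only on graphs outside your family, where many gadget nodes can sit exactly at their thresholds. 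So any proof must exploit privacy on such graphs, and your proposal gives no mechanism for that.

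\textbf{How the paper avoids this.} It never proves a local lower bound directly. Instead it reduces from the \emph{central}-model reconstruction bound for inner-product queries (\Cref{lem:inn-prod-dp-lb-new}). Because the reduction algorithm is central, it can run every node's randomizer itself, so the issue of gadget nodes seeing all of $x$ disappears. The price is that it must answer $q=\Theta(n)$ queries while running $\mathcal{A}$'s randomizers on $x$-dependent graphs only twice.

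This is where maximum degree at least $2$ is used. Node $1$ of $H$ is adjacent to $2$ and $3$, so edges $U_1$--$U_2$ encode $x$ and edges $U_1$--$U_3$ encode the query $a$. Every $x$-dependent node then has only two possible neighborhoods across all query graphs. Mix-and-match (\Cref{alg:mix-and-match}) therefore simulates $\mathcal{A}$ on every query graph from two private runs, giving $(4,\frac{1}{25})$-DP via group privacy and composition. Padding to $x\circ(\vec 1-x)$ and $a\circ\vec 0$ makes the count a function of $\langle x,a\rangle$ alone, with forward difference $\Omega(n^{k-1})$ (\Cref{lem:f-properties}).

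Your monotone-polynomial inversion is the analogue of that last step and is fine. What is missing is the privacy side of the argument.
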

\aznote{Can also mention: The same result for $H$ with maximum degree equal to $1$ would imply the edge-counting lower bound open in \cite{RaskhodnikovaSWZ26} -- even if $H$ has more than $2$ nodes, it is equivalent to edge counting.}
In combination, \Cref{thm:ub-inf,thm:lb-inf} pin down the dependence on $n$ in the error of \LNDP algorithms for estimating %
counts of a $k$-node template graph
to $\Theta_k(n^{k-\frac 12})$.
Interestingly, while a factor of $n$ separates subgraph counting 
under the central and local variants of edge DP,
there is only a $\sqrt{n}$ gap between the accuracies achievable under central and local node DP---see \Cref{tab:sub-results}.

Previously, the only bounds for subgraph counting in the local model were for LEDP 
\cite{EdenLRS25,SuppakitpaisarnPHH25}; generalizing the techniques used in those works only yields trivial bounds for our setting.
LEDP lower bounds are only known for restricted classes of subgraphs, namely cliques and cycles \cite{EdenLRS25,SuppakitpaisarnPHH25}. 
Our \LNDP lower bounds are tight and apply to arbitrary template graphs of maximum degree greater than 1.
Additionally, our lower bound
establishes
the first $\Omega(\sqrt{n})$ separation between local and central node DP for general graph problems.
While \cite{RaskhodnikovaSWZ26} gave local-central separations, they 
either involved only $\poly(1/\eps)$ factors or only applied to promise problems, in which the accuracy guarantee 
holds for
a restricted class of inputs graphs (with privacy holding for all graphs).
Our lower bounds 
are also %
more general than the analogous results for edge privacy and provide new insights into the relationship between the local and central models.

\subsection{Our Techniques}
\label{sec:techniques}

We now describe the techniques behind our \LNDP algorithm and 
additive-error lower bounds for
counting subgraphs.

\subsubsection{Algorithm for Subgraph Counting}

In the setting of local \emph{edge} differential privacy (LEDP), optimal noninteractive algorithms for subgraph counting work by releasing the graph's entire adjacency matrix via randomized response on each entry with parameter $\eps$, and postprocessing the result \cite{EdenLRS25}.
This technique breaks down under the constraint of node privacy since rewiring a single node can change $\Theta(n)$ edges (i.e., bits in the adjacency matrix); consequently, 
releasing the adjacency matrix with an $(\eps,\del)$-\LNDP guarantee requires running randomized response with parameter $O(\eps/\sqrt{n})$ on each bit. Such an approach yields an algorithm that counts $k$-node subgraphs with additive error $\Omega(n^k)$, which is trivial.

We devise a different approach for our algorithm.
It relies heavily on public randomness, which is used to partition the node set into $n/k$ blocks of $k$ nodes, each corresponding to a single candidate copy of the template graph $H$.
Within each block, the relevant nodes report, via randomized response with debiasing, whether the edges required by their assigned positions in $H$ are present.
Postprocessing multiplies these reports to obtain a noisy indicator for the candidate copy, and then rescales and averages the indicators over all blocks to obtain a final estimate for the subgraph count.

\cwnote{old text in green; new text in purple. We will choose one. (It is all my text---I just needed to distinguish between the two options)}

This random partitioning-based design has two key properties that are essential to our algorithm's (optimal) error guarantees.
The first feature, reduced sensitivity,
emerges because \emph{rewiring a single node 
can affect only the $k$ reports from its own block}. 
Randomized response with parameter $\eps/k$ thus suffices for node privacy, which keeps the privacy error small.
The second feature is that, after rescaling, the noisy indicators we extract from each of the blocks are unbiased, low-variance estimators of the total subgraph count.
Partitioning the nodes, as opposed to sampling blocks with replacement, makes the accuracy analysis delicate since the block indicators are no longer independent. Nevertheless, we show through a second-moment calculation that the correlations between these indicators are mild, so the sampling error is low.

Averaging the $n/k$ rescaled signals
yields an unbiased estimate with 
two sources of additive error: sampling contributes $O_k\bparen{n^{k-\frac12}}$, and privacy noise contributes $O_k\bparen{n^{k-\frac12}/\eps^k}$. In fact, for each block it suffices to obtain reports only from nodes in a vertex cover of $H$, improving the dependence from
$\eps^k$ to $\eps^{\vc{H}}$ and giving the error stated in \Cref{thm:ub-inf}.

\subsubsection{Lower Bound for Subgraph Counting}
\label{sec:intro-lb-tech}
    
Our lower bound (\Cref{thm:lb-inf}) %
is proven via a reduction from accurately answering inner product queries under differential privacy in the \emph{central model}. Given a secret dataset $\sv \in \{0, 1\}^n$, an \emph{inner product query} (or \emph{linear query}) on $\sv$ is specified by a vector $\iq \in \{0, 1\}^n$, and maps $\sv$ to the answer $p = \inn{\sv, \iq}$.
A seminal result in privacy %
\cite{DinurN03, DworkMT07} %
demonstrates that algorithms that %
answer a set of $\Theta(n)$ inner product queries $\iq^{(1)}, \ldots, \iq^{(q)}$ 
to within error roughly $\sqrt{n}$ can be used to reconstruct
an overwhelming majority
of %
$\sv$. Such algorithms, of course, are not differentially private even for modest privacy parameters.

Our reduction shows that if $\cA$ is an \LNDP algorithm with additive error $\alpha$ for $H$-counting (i.e., estimating $\#H(G)$) on $n$-node graphs, then it can be used to construct a DP algorithm $\cB$ in the \emph{central} model with additive error $O(\alpha / n^{k-1})$ for answering inner product queries. Thus, if $\alpha \leq c n^{k-1/2}$, then $\cB$ achieves error $O(\sqrt{n})$, which contradicts DP for sufficiently small constant $c > 0$.

We now describe the main ideas behind our reduction.
Algorithm $\cB$ in our reduction makes \emph{two} calls to $\cA$ on graphs that depend on $\sv$ but not on the queries. The algorithm then answers all inner product queries by postprocessing the outputs of these calls.
For each inner product query $\iq$, it 
combines these stored values to emulate the output of $\cA$ on a \emph{query graph} whose subgraph count reflects the unknown inner product value,
without explicitly running $\cA$ on each query graph.

The query graph corresponding to $\iq$ embeds $\inn{\sv, \iq}$ in its edges. 
For our reduction to work, the query graph must satisfy three requirements:
\newcommand{\Rlist}{
    \begin{enumerate}
        \item[R1:] 
        \label{itm:req-rewire}
        A single bit change in $\sv$ must correspond to a rewiring of a constant number of nodes, so that neighboring datasets in the inner product problem correspond to graphs at constant node distance. %
        \item[R2:] 
        \label{itm:req-err}
        An error of $\alpha$ in the subgraph count must translate to an error of $O(\alpha / n^{k-1})$ in the estimate for the inner product query. 
        \item[R3:] 
        \label{itm:req-calls}
        Each node whose neighborhood depends on 
        the sensitive vector $\sv$ must have one of \emph{two} possible neighborhoods in each of the $q$ query graphs, so that two runs of this node's randomizer suffice to release the relevant information about these neighborhoods.

    \end{enumerate}
}
\Rlist

\input{figures/query-graph-new}

We give the following construction of the query graph satisfying these requirements. 
Suppose w.l.o.g.\ that node 1 in $H$ is connected to nodes 2 and 3.
The query graph for an inner product query $\iq$ 
has $kn$ nodes,  %
equipartitioned
into $k$ parts $U_1, \ldots, U_k$, each corresponding to a node of $H$;
edges of $H$ correspond %
to complete bipartite graphs between the %
respective parts, except for the edges %
from
$U_1$ %
which encode the input: %
the $i$'th node of $U_1$ connects to all nodes in $U_2$ iff $x_i = 1$, and to all
nodes in
$U_3$ iff $a_i = 1$. 
Note that a change in a single bit of $\sv$ corresponds to rewiring a node in $U_1$ (thereby satisfying requirement R1), %
which can change the subgraph count in the query graph by $\Theta(n^{k-1})$.
\Cref{fig:querygraph} depicts
the query graph
when $H$ is a triangle.

While variants of the query graph have appeared before (see \cite{EdenLRS25,SuppakitpaisarnPHH25}), our construction is
novel
in several ways: it embeds an inner-product query rather than a similar but meaningfully different ``outer-product'' query, %
encodes the secret vector $\sv$ differently, and applies to arbitrary template graphs with maximum degree greater than $1$, rather than just cliques and cycles.

\input{figures/c4-query-graph}

A central %
difficulty
is in satisfying requirement R2. %
In the special case that %
$H$ is a $k$-clique, every copy of $H$ %
must use exactly one node in each part %
of the query graph, 
leading to a subgraph count of %
$n^{k-1} \inn{\sv, \iq}$. Thus, dividing an estimate for the subgraph count by %
$n^{k-1}$ yields an estimate for $\inn{\sv, \iq}$, 
with the error similarly scaling down by a factor of $n^{k-1}$.
However, the general case is more involved: for example,
when $H$ is a 4-cycle, \Cref{fig:c4-querygraph} shows that copies of $H$ %
may use multiple nodes from one part, with the subgraph count %
depending on $\|\sv\|$ and $\|\iq\|$ rather than just $\inn{\sv, \iq}$.

We resolve this difficulty by building the query graph from the concatenated vectors $\svc = \sv \circ (\vec 1 - \sv)$ and $\iqc = \iq \circ \vec 0$, which satisfy $\|\svc\| = n, \|\iqc\| = \|\iq\|$, and $\inn{\svc, \iqc} = \inn{\sv, \iq}$.
Since the weights of both $\svc$ and $\iqc$ are known, the $H$-count of $G[\svc, \iqc]$ is a function $g(\cdot)$ of just one private unknown $\inn{\sv, \iq}$. A single bit change in $\sv$ now rewires two nodes of the query graph, which still satisfies requirement R1.
As we show, $g(\cdot)$ is a quickly increasing function of $\inn{\sv, \iq}$ with a discrete derivative of 
$\frac{\partial g}{\partial \inn{\sv, \iq}} = \Omega(n^{k-1})$,
allowing us to invert it to obtain an estimate for $\inn{\sv, \iq}$.\footnote{Since $g$ is only defined on the integers, instead of the derivative we use the finite difference $\Delta g(b) = g(b+1) - g(b)$. See \Cref{sec:lb} for details.}
Then, given an estimate $d$ with error $\alpha$ for the subgraph count,
returning $g^{-1}(d)$ as our estimate for $\inn{\sv, \iq}$ gives an error of
$|g^{-1}(d) - \inn{\sv, \iq}| \lesssim \alpha / \frac{\partial g}{\partial \inn{\sv, \iq}} = O \left(\alpha / n^{k-1}\right)$.

Requirement R3 lets us avoid calling %
\LNDP algorithm $\cA$ 
for each of the $q$ query graphs,
which would result in a $\sqrt{n}$ %
blowup in privacy parameters. 
We circumvent this by adapting the ``mix-and-match'' technique \cite{EdenLRS25} to our reduction from inner product queries. 
The idea is that 
all nodes in the query graphs whose neighbors depend on the sensitive vector 
$\sv$---namely, those in $U_1$ and $U_2$---have at most \emph{two} views across {\em all} query graphs. %
The neighborhood of each node in $U_2$ is the same for all query graphs, whereas each node $v_{(1,i)}$ in $U_1$ has one of two possible neighborhoods, determined by the bit $a_i$ of the inner product query. 
To obtain the information that $\cB$ needs from each node in $U_1$ and $U_2$, we run the randomizers assigned to each node by $\cA$ on the two query graphs where $\iq$ is set to $\vec 0$ and $\vec 1$, and cache 
the two reports from each node.
All other nodes %
have no access to the information pertaining to the secret vector 
$\sv$, so we can run randomizers on them multiple times as part of postprocessing with no privacy loss. %
By appropriately matching the reports from different nodes so that they are %
what each node would report in the true query graph for a given query, %
we can simulate each of the $q$ runs of $\cA$ with only two {\em private} calls to %
$\cA$.

\subsection{Related Work}
\label{sec:unrelated}

DP for graph data was first investigated by \cite{NissimRS07}, who gave edge-private algorithms for triangle counting and estimating the cost of the minimum spanning tree. 
General private subgraph counts were first studied by 
\cite{KarwaRSY14} (strengthening \cite{RastogiHMS09}, who studied the problem under a relaxation of DP), in the central model under edge privacy.

\paragraph{Node DP.}
The first node-private algorithms were for subgraph counts and appeared in
\cite{BlockiBDS13,ChenZ13,KasiviswanathanNRS13}, after node privacy was formulated by \cite{HayLMJ09}. Subsequent work developed node-private algorithms for degree distributions via Lipschitz extensions \cite{RaskhodnikovaS16} and projections \cite{DayLL16}, as well as for practical triangle counting \cite{DingZBJ18}, graphon and $\ergraph$ parameter estimation \cite{BorgsCSZ18,SealfonU21}, counting connected components \cite{KalemajRST23, JainSW24}, and node privacy under continual observation \cite{JainSW24}. More recent work %
used sum-of-squares methods for node-private graphon estimation and edge-density estimation in random graph models \cite{ChenDDHLS24,ChenDHS24}. Generic, exponential-time private algorithms with accuracy based on a function's down-sensitivity have also been applied to node privacy \cite{FangDY22,LinderRSS25}.
Node-private methods have also been applied to epidemiological data \cite{HodNGKTS26}.
Closest to our setting, \cite{RaskhodnikovaSWZ26} introduced local node differential privacy and studied noninteractive and interactive algorithms for releasing functions of a graph's degree list. See \cite{RaskhodnikovaS16-E} for a survey of graph privacy.

\paragraph{(Vanilla) Local DP.}
Local randomization algorithms \cite{Warner65,EvfimievskiGS03} predate the
introduction of DP, but were later recognized as
mechanisms satisfying this definition~\cite{DworkMNS16,MishraS06}.
The (interactive) local model of DP was 
formalized and studied systematically by \cite{KasiviswanathanLNRS11}. 
Lower bounds for bit summation in this model were established by \cite{BeimelNO08}.
A large body of subsequent work
has
developed the statistical and information-theoretic foundations of local privacy, including general lower-bound techniques and tight characterizations for basic estimation tasks
\cw{(e.g.,}
\cite{DuchiJW13}). Interactivity in the local model was systematically studied by \cite{JosephMNR19}, who
formalized several notions of interactivity and
showed that, for hypothesis testing, noninteractive randomized response is optimal even among interactive protocols. Local DP also saw several large-scale deployments, including RAPPOR \cite{ErlingssonPK14}, Prochlo \cite{BittauEMMRLRKTS17}, Microsoft telemetry \cite{DingKY17}, and Apple's method for identifying popular image locations \cite{apple-iconic-scenes23}.
While some works \cite{KasiviswanathanLNRS11,BeimelNO08,JosephMNR19}
have dealt %
with the interactive setting, many---including ours---focus
on the noninteractive setting (e.g.,
\cite{DuchiJW13,BassilyS15,BunNS19,EdmondsNU20,ChenGKM21,FeldmanMRT25,CanonneGS26,AliakbarpourFRS26}).

\paragraph{Local Edge DP.}
With the exception of \cite{RaskhodnikovaSWZ26}, all prior work on local privacy for graphs considered the weaker notion of local {\em edge} differential privacy (LEDP). The model was introduced by \cite{QinYYKXR17}, who studied the generation of synthetic graphs, and was formally defined for the noninteractive and interactive settings in \cite{ImolaMC21} and \cite{EdenLRS25}, respectively. %
Subsequent work investigated the estimation of graph statistics, including triangle and $k$-star counts \cite{ImolaMC21,ImolaMC22-tri,EdenLRS25}, $k$-core decomposition and densest subgraph computation \cite{DhulipalaLRSSY22,MundraPSL25}, and more general subgraph counting problems \cite{SuppakitpaisarnPHH25}. Several works established lower bounds for these tasks. In particular, \cite{EdenLRS25} introduced the ``mix-and-match'' technique, obtaining lower bounds for triangle counting, and \cite{SuppakitpaisarnPHH25} extended this approach to complete subgraphs. Most recently, \cite{HillebrandSS25} established optimal error bounds for triangle counting in constant-degeneracy graphs. %

\subsection{Organization}

\Cref{sec:prelims} formalizes the subgraph-counting problem and gives background on differential privacy and the \LNDP model.
\Cref{sec:ub-sub} presents our \LNDP algorithm for subgraph counting (\Cref{alg:subh}), along with its error guarantees in \Cref{thm:sub-count}.
\Cref{sec:lb} gives our lower bound on the error required for subgraph counting under \LNDP (\Cref{thm:lb}), establishing the optimality of our algorithm's error.

\section{Preliminaries}
\label{sec:prelims}

We let $[n]$ denote $\set{1,\ldots, n}$.
Additionally, for an undirected graph $G$ on node set $[n]$, the set $\neigh{G}{i}$ denotes the neighborhood of node $i \in [n]$ in $G$.

We are interested in counting the number of \emph{unlabeled}, \emph{noninduced} instances of some
template graph $H$ in an undirected input graph $G$, under \LNDP.
(The number of \emph{labeled} instances in $G$ is a factor of $|\Aut(H)|$ larger than the number of unlabeled instances, where we let $|\Aut(H)|$ denote the number of automorphisms of $H$.)

\begin{definition}[Subgraph counting]
\label{def:subgraph-count}
Let $\subgraph = (V(\subgraph), E(\subgraph))$ be an undirected graph on $k$ nodes, and $n \in \mathbb{N}$.
An \emph{unlabeled, noninduced
 subgraph count query} on graphs with vertex set $[n]$ is specified by $\subgraph$ and maps an undirected input graph $G$ on vertex set $[n]$ to the number of unlabeled copies of $H$ in $G$. %
In other words, using $\hookrightarrow$ to denote an injection, it returns
\[
\#H(G) = \frac{1}{|\Aut(H)|}
\Big|\big\{\phi:V(H)\hookrightarrow V(G):
\forall
\set{u,v}
\in E(H),\
\set{\phi(u),\phi(v)}
\in E(G)\big\}\Big|\, .
\]
\end{definition}

\subsection{Local Node Differential Privacy}
\label{sec:lndp}

We now recall the definitions of DP and \LNDP (see \Cref{sec:dp-background} for additional background).

\paragraph{Node neighbors and differential privacy.}%

Differential privacy is defined with respect to neighboring datasets, which differ in the data of a single individual. In the tabular setting, two datasets $x,x'\in\cX^n$ are neighbors if they differ in one entry. We focus on node privacy for graphs: graphs $G$ and $G'$ on node set $[n]$ are \emph{node neighbors}, denoted $G\sim G'$, if one can be obtained from the other by rewiring a single node, i.e., by changing only the edges incident to some node $i\in[n]$. More generally, $G$ and $G'$ are at \emph{node distance} $k$ if one can be obtained from the other by rewiring $k$ nodes.

\begin{definition}[$(\eps,\delta)$-indistinguishability]
\label{def:indistinguishable}
Let $\eps>0$ and $\delta\in[0,1]$. Two distributions $P,Q$ over outcome space $\cY$ are \emph{$(\eps,\delta)$-indistinguishable}, denoted $P\simed Q$, if for all $Y\subseteq\cY$, %
$$
    P(Y)\le e^\eps Q(Y)+\delta
    \qquad\text{and}\qquad
    Q(Y)\le e^\eps P(Y)+\delta.
$$
We %
write $R_1\simed R_2$ for random variables $R_1$ and $R_2$ with $(\eps,\delta)$-indistinguishable distributions.
\end{definition}

\begin{definition}[Differential privacy (DP) \cite{DworkMNS16}]
\label{def:dp batch}
Let $\eps>0$ and $\delta\in[0,1]$. A randomized algorithm $\cA:\cU^*\to\cY$ is \emph{$(\eps,\delta)$-DP} if $\cA(x)\simed \cA(x')$ for every pair of neighboring inputs $x,x'\in\cU^*$.
\end{definition}

Many DP algorithms in the local model are built from the \emph{randomized response} primitive of \cite{Warner65}, which was shown by \cite{DworkMNS16} to satisfy $(\eps,0)$-DP.

\begin{definition}[Randomized response]
\label{def:rr-alg}
    Given $\eps > 0$, let $\rr[\eps]:\zo \to \zo$, called \emph{randomized response}, be the $(\eps,0)$-DP algorithm that, on input $b\in\zo$, returns $b$ with probability $\frac{e^\eps}{e^\eps + 1}$, and $1-b$ with probability $\frac{1}{e^\eps + 1}$.
\end{definition}

The standard method for debiasing randomized response is analyzed, e.g., in \cite{EdenLRS25}.

\begin{fact}
\label{fact:RR}
    Given output $X = \rr[\eps](b)$, the \emph{debiased} estimate
    \(
        Y = \frac{X(e^\eps + 1) - 1}{e^\eps - 1}
    \)
    has expectation $\E[Y] = b$. Moreover, $\Var(Y) = \frac{e^\eps}{(e^\eps - 1)^2}$,
    which is
    $\Theta \paren{\frac{1}{\eps^2}}$ for $\eps \leq 1$.
\end{fact}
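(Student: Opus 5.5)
The plan is a direct computation: $Y$ is an affine function of the Bernoulli variable $X$, so it suffices to compute the mean and variance of $X$ in each of the two cases $b=0$ and $b=1$ and then apply linearity of expectation and the scaling rule for variance.

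First, from \Cref{def:rr-alg}, $X$ is Bernoulli with success probability $p_b=\Pr[X=1]$, where $p_1=\frac{e^\eps}{e^\eps+1}$ and $p_0=\frac{1}{e^\eps+1}$. Both cases are captured by the single formula
\[
p_b=\frac{1+b(e^\eps-1)}{e^\eps+1}.
\]
Hence $\E[X]=p_b$, and linearity gives
\[
\E[Y]=\frac{(e^\eps+1)p_b-1}{e^\eps-1}=\frac{b(e^\eps-1)}{e^\eps-1}=b .
\]
For the variance, note that $p_1=1-p_0$. So in both cases
\[
\Var(X)=p_b(1-p_b)=\frac{e^\eps}{(e^\eps+1)^2}.
\]
Since $Y=\frac{e^\eps+1}{e^\eps-1}X-\frac{1}{e^\eps-1}$, we get
\[
\Var(Y)=\frac{(e^\eps+1)^2}{(e^\eps-1)^2}\cdot\frac{e^\eps}{(e^\eps+1)^2}=\frac{e^\eps}{(e^\eps-1)^2}.
\]

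For the asymptotic claim, I would use elementary bounds valid for $\eps\in(0,1]$. Convexity of $t\mapsto e^t$ gives
\[
\eps\le e^\eps-1\le (e-1)\eps,
\]
the lower bound from $e^t\ge 1+t$ and the upper bound from the chord of $e^t$ on $[0,1]$. We also have $1\le e^\eps\le e$. Together these give
\[
\frac{1}{(e-1)^2\eps^2}\le \Var(Y)\le \frac{e}{\eps^2},
\]
that is, $\Var(Y)=\Theta(1/\eps^2)$.

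There is no real obstacle here. The only point needing care is that $\Var(X)$ does not depend on $b$, which follows from the symmetry $p_1=1-p_0$. This is what makes the variance formula hold uniformly over $b\in\zo$.
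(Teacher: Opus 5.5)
Your computation is correct and complete. The formula $p_b=\frac{1+b(e^\eps-1)}{e^\eps+1}$ gives unbiasedness at once, the symmetry $p_1=1-p_0$ makes the variance of $X$ independent of $b$, and the convexity bounds $\eps\le e^\eps-1\le(e-1)\eps$ give the $\Theta(1/\eps^2)$ claim with explicit constants. The paper does not prove this fact itself; it cites the standard analysis in \cite{EdenLRS25}, which is this same direct mean-and-variance calculation, so your argument just supplies the omitted details.
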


\paragraph{LNDP$^\star$.}
Following the convention in \cite{RaskhodnikovaSWZ26}, we use \LNDP to denote noninteractive LNDP. In the \LNDP setting, each node runs a \emph{local randomizer} on its edge list
and releases the result. (Local randomizers can depend on public randomness.) These noisy releases are then postprocessed to obtain a final result. \LNDP requires that the joint distributions over all nodes' reports be $(\eps,\del)$-indistinguishable on node-neighboring input graphs, for all strings $\pubrand$ of public randomness.

\begin{definition}[%
\LNDP %
\cite{RaskhodnikovaSWZ26}]
\label{def:lndp}
    Let $\eps > 0, \del\in[0,1]$, and $n\in \N$.
    An %
    algorithm $\cA$
    is \edLNDP
    if there exist (i) a distribution $\distpubrand$ over  
    strings $\pubrand$ (public randomness), 
    (ii) local randomizers $\cR_{1,\pubrand},\ldots, \cR_{n,\pubrand}$ (depending on $\pubrand$),
    and (iii) a postprocessing algorithm $\cP$ such that 
    \begin{enumerate}
        \item \emph{(Locality and noninteractivity)}
        for all graphs $G$ on node set $[n]$, the algorithm
        $\cA$ can be represented as
        $
            \cP \bigl( \cR_{1,\pubrand}(\neigh{G}{1}), \ldots, \cR_{n,\pubrand}(\neigh{G}{n}) \bigr),
        $
        where $\pubrand\sim \distpubrand$ and node $i$ runs a randomizer $\cR_{i,\pubrand}$ on its neighborhood $\neigh{G}{i}$, and 
        \item \emph{(Node privacy)} For all node-neighboring graphs $G$ and $G'$ on node set $[n]$ and all settings of public randomness $\pubrand$,
        the distributions of vectors of outputs released by the randomizers are $(\eps,\del)$-indistinguishable---that is, $\vec{\cR}_\pubrand(G)
        \approx_{\eps,\del}
        \vec{\cR}_\pubrand(G')$, where $\vec{\cR}_\pubrand(G) = \bigl( \cR_{1,\pubrand}(\neigh{G}{1}), \ldots, \cR_{n,\pubrand}(\neigh{G}{n}) \bigr)$.
        Equivalently, for all $\rho$, the map $\vec{\cR}_\pubrand$ is $(\eps, \del)$-DP under the node-neighboring relation.
    \end{enumerate}
\end{definition}

\section{Algorithm for Subgraph Counting with \texorpdfstring{\LNDP}{LNDP}}
\label{sec:ub-sub}

In this section we present our \LNDP algorithm for counting the number of unlabeled instances of a $k$-node template graph $H$ 
in an $n$-node graph $G$. Our algorithm (separately) handles counting induced and noninduced copies of $H$.

For a template graph $H = ([k], E(H))$, let $\vc{H} \in [k]$ denote the size of the minimum vertex cover of $H$.

\begin{theorem}[Subgraph counting]
\label{thm:sub-count}
    Let $\algsubh$ be \Cref{alg:subh} with parameters $\eps\in (0,1]$, 
    $k,n\in \N$,
    and undirected template graph $H$ on node set $[k]$.
    Then $\algsubh$ is $(\eps,0)$-\LNDP and, for all graphs $G$ on node set $[n]$, satisfies
    \[
        \Ex\bracks{\babs{\algsubh(G) - \subgraphcount(G)}}
        =
        O\paren{\frac{\sqrt{k!}}{|\Aut(H)|} \cdot n^{k-\frac12}\cdot \paren{\frac{\vc{H}}{\eps}}^{{\vc{H}}} }
        = O_k\paren{ \frac{n^{k-\frac12}}{\eps^{k-1}} }.
    \]
\end{theorem}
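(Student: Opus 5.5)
We will analyze \Cref{alg:subh} as described in \Cref{sec:techniques}. We fix a minimum vertex cover $C\subseteq[k]$ of $H$ with $|C|=\vc{H}=c$. Using public randomness $\pubrand$, we draw a uniformly random permutation $\pi$ of $[n]$. This partitions the nodes into $m=\lfloor n/k\rfloor$ blocks $B_1,\dots,B_m$, where block $B_j$ consists of the ordered nodes $(\pi((j-1)k+1),\dots,\pi(jk))$. In block $j$ we identify its $\ell$-th node $v_{j,\ell}$ with node $\ell$ of $H$. For each edge $\{\ell,\ell'\}\in E(H)$, we assign the report of the bit $\mathbf 1[\{v_{j,\ell},v_{j,\ell'}\}\in E(G)]$ to one endpoint lying in $C$. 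The choice of $\pi$ fixes one labeled injection $\phi_j$ per block.

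\textbf{Privacy.} Each node in $C$ reports at most $k-1$ bits, all with $\rr[\eps/(k-1)]$; nodes outside $C$ report nothing. Rewiring a node $u$ changes only bits about pairs containing $u$. For a fixed $\pi$, the only reported bits that involve $u$ are pairs inside $u$'s block, and there are at most $\binom{k}{2}$ of them. Each reported pair is reported exactly once, so to bound the budget we count how many changed bits can appear in the released vector.

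We will handle this more carefully. Each pair $\{\ell,\ell'\}$ is reported by a single node in $C$. Hence the changed bits inside the block number at most $|E(H)|\le \binom k2$. We therefore use parameter $\eps' = \eps/\binom{k}{2}$, or more sharply $\eps/(\text{number of reported bits in a block})$. With this choice, basic composition gives $(\eps,0)$ for every $\pi$.

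To get the stated $(c/\eps)^c$ dependence, we use a grouping argument. A product of debiased bits across the edges owned by the same cover node $w$ is replaced by a single randomized-response bit on the conjunction $\prod_{\ell'} \mathbf 1[\{w,v_{j,\ell'}\}\in E(G)]$. For noninduced counts, the indicator that $\phi_j$ is a copy equals the product over $w\in C$ of these $c$ conjunctions. Each block then releases $c$ bits, and each bit is run with $\rr[\eps/c]$. Changed bits caused by rewiring $u$ are confined to $u$'s block, so at most $c$ bits change and $(\eps,0)$-\LNDP follows by composition.

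\textbf{Accuracy.} Let $Y_{j,w}$ denote the debiased reports from \Cref{fact:RR}. Write $Z_j=\prod_{w\in C}Y_{j,w}$ and $I_j=\mathbf 1[\phi_j \text{ is a copy}]$. The $Y$'s are independent given $\pi$ and $G$, so $\E[Z_j\mid\pi]=I_j$. Each $\phi_j$ is marginally a uniform injection $[k]\hookrightarrow[n]$, so $\E[I_j]=p:=|\Aut(H)|\,\#H(G)/(n)_k$. The output is
\[
\hat N=\frac{(n)_k}{|\Aut(H)|\,m}\sum_{j=1}^m Z_j ,
\]
which is unbiased. We bound $\E|\hat N-\#H(G)|\le\sqrt{\Var(\hat N)}$ and split the variance by the law of total variance.

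The privacy term is $\E\,\Var(\sum_j Z_j\mid\pi)=\sum_j\E\,\Var(Z_j\mid\pi)$. Since $\E[Y^2]\le 1+O(c^2/\eps^2)$, this is at most $m\cdot O(c/\eps)^{2c}$. After scaling, it contributes $\frac{n^k}{|\Aut(H)|}\cdot m^{-1/2}(c/\eps)^c = O\big(\frac{\sqrt k\, n^{k-1/2}}{|\Aut(H)|}(c/\eps)^c\big)$.

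The sampling term is $\Var(\sum_j I_j)$, and we expect it to be the main obstacle. The $I_j$ are not independent under a random partition, so we need the second-moment calculation. For $j\ne j'$, the pair $(\phi_j,\phi_{j'})$ is a uniform pair of disjoint injections. We will show $\Cov(I_j,I_{j'})\le O(k^2/n)$, or at worst make it nonpositive up to that order. To do this we compare sampling $2k$ distinct nodes with sampling two independent $k$-tuples. The two distributions differ in total variation by $O(k^2/n)$, the probability of a collision among independent tuples, and $I_jI_{j'}\in\{0,1\}$. This gives $\Var(\sum I_j)\le m + m^2\cdot O(k^2/n)=O(k n)$. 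After scaling, the sampling term contributes $\frac{n^k}{|\Aut(H)|m}\sqrt{kn}=O\big(\frac{k^{1.5}n^{k-1/2}}{|\Aut(H)|}\big)$, which is dominated by the privacy term since $\eps\le1$.

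The $\sqrt{k!}$ in the statement leaves slack for a cruder bound, for instance writing $(n)_k\le n^k$ and handling $n<k$ or leftover nodes trivially. For $n<2k$ the bound is trivial because $\#H\le n^k$. Combining the two terms gives the claimed error.
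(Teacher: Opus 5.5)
Your proof is correct. Your final (grouped) algorithm is the paper's. It uses a public random partition into $k$-blocks and one randomized-response bit, with parameter $\eps/\vc{H}$, per vertex-cover position, on a conjunction of that node's required edges. You conjoin over ``owned'' edges and the paper over all $H$-neighbors, but the product is the same copy indicator. Your privacy argument is also the paper's: rewiring a node touches only the $\vc{H}$ bits of its own block.

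Where you differ is the accuracy analysis. You use the law of total variance instead of the paper's triangle inequality followed by two applications of Jensen. More substantively, you bound the cross-block correlation with a coupling argument. A pair of disjoint uniform injections is within total variation $k^2/n$ of two independent ones, so $\mathrm{Cov}(I_j,I_{j'})\le k^2/n$ additively. The paper instead counts ordered pairs of vertex-disjoint copies, using $D\le T^2$, and evaluates $\binom{n-1}{k-1}/\binom{n-k-1}{k-1}=1+O(k^2/n)$. This gives a multiplicative bound $\Var(\widetilde T)\le T/p+O(k^2T^2/n)$.

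Your route is shorter and avoids the binomial algebra. The paper's bound is instance-adaptive: the sampling error shrinks with the true count $T$. Your additive bound gives $O(kn^{k-1/2}/|\Aut(H)|)$ regardless of $T$, which is all the worst-case theorem needs.

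There are two small slips. First, $m+m^2\cdot O(k^2/n)$ with $m\le n/k$ is $O(n)$, not $O(kn)$. Second, your sampling bound is not in general dominated by your privacy bound; for example, with $\vc{H}=1$ and $\eps=1$ it is larger by a factor of order $k$. The sum still meets the stated bound, since $k^{3/2}=O(\sqrt{k!})$ and $(\vc{H}/\eps)^{\vc{H}}\ge 1$.
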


\Cref{alg:subh} can be modified to instead count \emph{induced} copies of $H$.
The resulting error guarantee is the same, except $\vc{H}$ is replaced by $k-1$.

\paragraph{Algorithm overview.}
At a high level, we first randomly partition the node set into blocks of $k$ nodes.
For each block, we randomly assign each node to a node in $H$, and have each node in a minimum vertex cover $\msf{C}$ of $H$ use randomized response to release a debiased version of a noisy bit indicating whether its edge set matches that of its assigned node in $H$.
Multiplying these debiased values yields a noisy indicator for whether $H$ appears in the subgraph induced by the block (for a random labeling of nodes in the block).
Finally, we sum these noisy indicators 
and scale up by $\frac{k!}{|\Aut(H)|}\cdot\binom{n-1}{k-1}$ to obtain an unbiased estimate for the total number of unlabeled occurrences of $H$ in $G$.

In our algorithm, we use the following notation. For a sequence 
$I=(v_1,\ldots, v_k)$ of nodes in $[n]$,
define the  indicator variable $\edgematch^{I,H}_i = 1$ iff for every neighbor $j$ of $i$ in $H$, the pair $\set{v_i,v_j}$ forms an edge in $G$.

\input{algs/lndp-counting}

\begin{proof}[Proof of \Cref{thm:sub-count}]
    Assume, without loss of generality, that $n$ is divisible by $k$ so that the step on Line~\ref{line:pub-rand-alg} is well defined; otherwise, add dummy nodes to $G$ to ensure $k$ divides $n$.

    \emph{(Privacy.)}
    Fix a random permutation $P$ and a pair of node-neighboring graphs $G$ and $G'$, where $i^*$ is the rewired node.
    By \Cref{def:lndp}, to prove that the algorithm is $(\eps, 0)$-\LNDP, 
    it suffices to show that releasing the vector of node outputs
    $\bparen{
    \cw{Y}_i^{(j)}}_{i\in \msf{C} ,j\in[n/k]}$
    is $(\eps, 0)$-DP.
    This vector is a function only of the sequence 
    $\bparen{\edgematch^{P^{(j)},H}_i}_{i\in\msf{C}, j\in[n/k]}$.
    The only output-producing nodes that may have different values of $\edgematch^{P^{(j)},H}_i$ for $G$ and $G'$ are the $\vc{H}$
    nodes in the sequence $P^{(j)}$ containing $i^*$.
    Thus, $\bparen{\edgematch^{P^{(j)},H}_i}_{i\in \msf{C},j\in[n/k]}$ has $\ell_1$ sensitivity at most $\vc{H}$,
    so by group privacy the algorithm that releases
    each bit of this vector via debiased randomized response with parameter $\eps'$ (as on Line~\ref{line:sub-node-release}) is $(\eps,0)$-DP.
    
    \emph{(Accuracy.)} 
    Let $T$ denote the number of unlabeled copies of $H$ in $G$, and define the indicator $S^{(j)} = \prod_{i\in \msf{C}} \edgematch_i^{P^{(j)},H}$; that is,
    $S^{(j)}=1$ iff the specified labeling embeds $H$ as a subgraph of the graph induced by $P^{(j)}$.
    We first show that $\wh T$ is an unbiased estimator of~$T$. For each $j \in [n/k]$, 
    \begin{align}\label{eq:expectation-of-Sj}
        \Ex[S^{(j)}] =\Pr[S^{(j)} = 1] =  \frac{T}{\binom{n}{k}} \cdot \frac{|\Aut(H)|}{k!},
    \end{align}
since  $S^{(j)} = 1$ if one of the $T$ copies of $H$ lands in $P^{(j)}$ and the right labeling is chosen.
From the debiasing on Line~\ref{line:sub-debias} and mutual independence of the $Y_i^{(j)}$ for fixed permutation $P$, we have 
\(
    \Ex[\wh S^{(j)} \mid P] = \prod_{i\in\msf{C}} \Ex[Y_i^{(j)} \mid P]
    = \E[S^{(j)} \mid P],
\)
so $\Ex[\wh S^{(j)}] = \Ex[S^{(j)}]$ by the law of total expectation.
By the definition of $\wh T$ on Line~\ref{line:subh-ans} and linearity of expectation,
\[
    \Ex[\wh T]
    = \frac{k!}{|\Aut(H)|} \cdot \binom{n-1}{k-1} \sum_{j \in [n/k]} \Ex[\wh S^{(j)}]
    = \frac{k!}{|\Aut(H)|}\binom{n-1}{k-1} \cdot \frac{n}{k} \cdot  \frac{T}{\binom{n}{k}} \cdot \frac{|\Aut(H)|}{k!}
    = T.
\]
   
    We next bound the algorithm's additive error.
    The error arises from two sources: randomized response (privacy error) and random permutation sampling (sampling error).
    Let $\wt T = \frac{k!}{|\Aut(H)|}\cdot \binom{n-1}{k-1}\cdot \sum_{j\in[n/k]} S^{(j)}$.
    By the triangle inequality, the error can be bounded as:
    \asnote{For any two real-valued random variables $A,B$, we have $\Var(A+B) = \Var(A) + \Ex_{a\sim A}\paren{\Var(B|A=a)}$. So applying that here with $A=$(sampling err) and $B=$(privacy err) might remove some of the redundancy in the calculations below? It would save a few lines at most...}
    \begin{equation}
    \label{eq:subh-error-tri-ineq}
        \Ex \bbracks{\babs{\wh T - T}}
        \leq
        \underbrace{\Ex \bbracks{\babs{\wh T - \wt T}}}_{\text{privacy error}}
        +
        \underbrace{\Ex \bbracks{\babs{\wt T - T}}}_{\text{sampling error}}.
    \end{equation}
    
\emph{Bounding the privacy error $\Ex\bbracks{\babs{\wh T - \wt T}}$.} We first bound the expected privacy error, conditioned on the public permutation $P$; since this bound is independent of $P$, taking expectation over $P$ yields the same bound unconditionally.
For each $j\in[n/k]$, let
$\Delta_j := \wh S^{(j)}-S^{(j)}$. To analyze the privacy error, we take advantage of the fact that, conditioned on $P$, all $\Delta_j$'s are independent (because local randomizers perform randomized response independently) and also, by the properties of the debiasing step (see \Cref{fact:RR}), have expectation~0. 

By linearity of conditional expectation,
\begin{align}
\label{eq:priverr-scale}
    \E\bracks{\abs{\wh T-\wt T}\mid P}
    &=
    \frac{k!}{|\Aut(H)|}\binom{n-1}{k-1}
    \E\Big[\Big|\sum_{j\in[n/k]} \Delta_j \Big| \ \big|\ P\Big].
\end{align}
To bound the expectation in \Cref{eq:priverr-scale}, we apply Jensen's inequality and then use the
fact that, conditioned on $P$, the $\Delta_j$'s are independent and have mean zero:
\begin{align}
\label{eq:priverr-jensen}
    \E\Big[\Big|\sum_{j\in[n/k]} \Delta_j \Big| \ \big|\ P\Big]
    &\le
    \sqrt{\E\Big[\Big(\sum_{j\in[n/k]} \Delta_j\Big)^2 \ \big|\ P\Big]}
    =
    \sqrt{\sum_{j\in[n/k]}\E[\Delta_j^2\mid P]} .
\end{align}
For each block $j\in[n/k]$, since $\E[\Delta_j\mid P]=0$ and $S^{(j)}$ is fixed conditioned on $P$, we get
\begin{align}
\label{eq:priverr-block}
    \E[\Delta_j^2\mid P]
    &=  \Var(\Delta_j\mid P)
    = \Var(\wh S^{(j)}\mid P)
    \le
    \E[(\wh S^{(j)})^2\mid P] \nonumber\\
    &=
    \prod_{i\in\msf C}\E[(Y_i^{(j)})^2\mid P]
    =
    O\paren{\frac{1}{(\eps')^{2\vc{H}}}}
    =
    O\paren{\paren{\frac{\vc{H}}{\eps}}^{2\vc{H}}},
\end{align}
using respectively, in the final line of \eqref{eq:priverr-block}, independence of the randomized-response noise added by different nodes within the block, \Cref{fact:RR}, and $\eps'=\eps/\vc{H}$.

Combining \Cref{eq:priverr-scale,eq:priverr-jensen,eq:priverr-block} and then using $\binom{n-1}{k-1}\leq\frac{(n-1)^{k-1}}{(k-1)!}$ gives
\begin{align}
    \E\bracks{\abs{\wh T-\wt T}\mid P}
    &\le
    \frac{k!}{|\Aut(H)|}\binom{n-1}{k-1}\,
     \sqrt{\frac nk}
    \cdot
    O\paren{\paren{\frac{\vc{H}}{\eps}}^{\vc{H}}}\nonumber\\
    &= O\paren{\frac{\sqrt{k}}{|\Aut(H)|} \cdot n^{k-\frac12}\cdot \paren{\frac{\vc{H}}{\eps}}^{\vc{H}}}
    . \label{eq:privacy-error}
\end{align}
By the law of total expectation, taking expectation over $P$ gives the same bound unconditionally.

\emph{Bounding the sampling error $\Ex\bbracks{\babs{\wt T - T}}$.}
We start by applying Jensen's inequality and then using the fact that $\E[\wt T]=T$:
\begin{equation}
\label{eq:jensen-sample}
    \Ex\bbracks{\babs{\wt T- T}}
    = \Ex\bbracks{\sqrt{(\wt T-T)^2}}
    \le \sqrt{\Ex\bbracks{(\wt T-T)^2}} 
    =\sqrt{\Var(\wt T)}.
\end{equation}
To bound $\Var(\wt T)$, we represent $\wt T$ as $\wt T=\frac S{p}$, where $S:=\sum_{j\in[n/k]}S^{(j)}$ and $p := \frac{|\Aut(H)|}{k!\binom{n-1}{k-1}}$.

The %
summands $S^{(j)}$ in the expression for $S$ are not independent. Nevertheless, since their pairwise correlation is fairly weak, one can still bound $\Var(S) = \Ex[S^2]-\Ex[S]^2$.
By~\eqref{eq:expectation-of-Sj}, we get $\Ex[S]=\E[\sum_{j\in[n/k]}S^{(j)}]=\frac n k \cdot \frac{T}{\binom{n}{k}} \cdot \frac{|\Aut(H)|}{k!}=Tp.$ Now we expand the second moment:
\begin{align*}
    \Ex[S^2]
    & =  \Ex\big[(\sum_{j\in[n/k]}S^{(j)})^2\big]
    =
    \sum_{j\in[n/k]} \Ex[S^{(j)}]
    +
    \sum_{i\neq j}\Ex[S^{(i)}S^{(j)}] 
    = T p + \sum_{i\neq j} \Pr[S^{(i)}=S^{(j)}=1].
\end{align*}
The last sum has $\frac n k(\frac n k-1)$ terms. Blocks $P^{(i)}$ and $P^{(j)}$ in the partition are vertex-disjoint, so both $S^{(i)}$ and $S^{(j)}$ can be 1 only if the two blocks sample two vertex-disjoint copies of $H$ in $G$. Let $D$ be the number of ordered pairs of vertex-disjoint copies of $H$. Then $D\le T^2$, and
\begin{align*}
\sum_{i\neq j} \Pr[S^{(i)}=S^{(j)}=1]
&=\frac {n(n-k)} {k^2} \cdot D\frac{|\Aut(H)|^2}
    {(k!)^2\binom{n}{k}\binom{n-k}{k}}
\leq T^2 \frac{|\Aut(H)|^2}
    {(k!)^2\binom{n-1}{k-1}\binom{n-k-1}{k-1}}\\
& =
    p^2 \cdot
    \frac{\binom{n-1}{k-1}}{\binom{n-k-1}{k-1}}
    = p^2 \cdot 
    \prod_{r=1}^{k-1}
    \left(1+\frac{k}{n-k-r}\right)
    =
    p^2\left(1+O\paren{\frac{k^2}{n}}\right).
    \end{align*}
    Therefore,
    \(
        \Var(S)=\Ex[S^2]-\Ex[S]^2
        \leq T p
        +
        T^2p^2\paren{1+O\paren{\frac{k^2}{n}}}
        -
        T^2p^2 
        =
        T p + O\paren{\frac{k^2 T^2 p^2}{n}}.
    \) 

To bound $\Var(\wt T)$, recall that $\wt T =\frac S{p}$. 
Using $T=O(\frac{n^k}{|\Aut(H)|})$ and $\frac 1{p}=O(\frac{k!n^{k-1}}{|\Aut(H)|})$, we get
\begin{align*}
    \Var(\wt T)
    =
    \frac{\Var(S)}{p^2}
    &\le
    \frac{T}{p}
    +
    O\paren{\frac{k^2T^2}{n}}
    =
    O\paren{\frac{k!\cdot n^{2k-1}}{|\Aut(H)|^2}}.
\end{align*}
Substituting the expression for $\Var(\wt T)$ into \eqref{eq:jensen-sample} gives us
\[
    \Ex\bbracks{\babs{\wt T- T}}
    \le
    \sqrt{\Var(\wt T)}
    =
    O\paren{\frac{\sqrt{k!}}{|\Aut(H)|}\cdot n^{k-\frac12}}.
\]

    \emph{Bounding the overall error $\Ex\bbracks{\babs{\wh T - T}}$.}
    Substituting the bounds on privacy and sampling error into \Cref{eq:subh-error-tri-ineq} yields nearly the same bound as in \eqref{eq:privacy-error}, except that $\sqrt k$ changes to~$\sqrt{k!}$:
    \[
        \Ex\bbracks{\babs{\wh T - T}}
        = O\paren{\frac{\sqrt{k!}}{|\Aut(H)|} \cdot n^{k-\frac12}\cdot \paren{\frac{\vc{H}}{\eps}}^{\vc{H}} }.
    \]
This completes the proof of \Cref{thm:sub-count}.
\end{proof}

\input{sections/lower-bound-polynomial}

\AtNextBibliography{\small}
\printbibliography[heading=bibintoc]

\appendix
\section*{Appendix}
\addcontentsline{toc}{section}{Appendix}

\section{Background on Differential Privacy}
\label{sec:dp-background}

This section collects useful background on differential privacy, including common mechanisms and standard properties of the definition.

A common way to release a function's value under differential privacy is to use noise scaled to its sensitivity.

\begin{definition}[Sensitivity] Let $k\in\mathbb{N}$ and $f:\cX \rightarrow \mathbb{R}^k$
be a function. Let $p\in\{1,2\}$. 
The \emph{$\ell_p$-sensitivity of $f$}, denoted by $\Delta_p$, is defined as
    $\Delta_p=\max_{x\sim y}\|f(x)-f(y)\|_p.$
\end{definition}

\begin{lemma}[Laplace mechanism~\cite{DworkMNS16}] \label{lem:Laplacemech} Let $k\in\N$ and $\eps>0$ and $f:\cX^n \rightarrow \mathbb{R}^k$ 
be a function
with $\ell_1$-sensitivity $\Delta_1$. 
The Laplace mechanism is defined as
$\cA(x)=f(x)+(Z_1,\dots,Z_k)$, where the $Z_i \sim 
\Lap(\Delta_1/\eps)$ are independent for all $i\in[k]$. The Laplace mechanism is $(\eps,0)$-DP.
\end{lemma}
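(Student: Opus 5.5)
The plan is the standard pointwise density-ratio argument. Fix neighboring inputs $x\sim y$ and write $b=\Delta_1/\eps$. The output $\cA(x)$ has a density on $\mathbb{R}^k$. Since the $Z_i$ are independent $\Lap(b)$ variables, this density is
\[
p_x(z)=\prod_{i=1}^k \frac{1}{2b}\exp\paren{-\frac{|z_i-f(x)_i|}{b}},
\]
and $p_y$ is defined analogously. If $\Delta_1=0$, then $f(x)=f(y)$, the two output distributions coincide, and there is nothing to prove. So assume $\Delta_1>0$.

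The key step is to bound the ratio of densities at every point $z\in\mathbb{R}^k$. The normalizing constants cancel, and the triangle inequality $|z_i-f(y)_i|-|z_i-f(x)_i|\le |f(x)_i-f(y)_i|$ applied coordinatewise gives
\[
\frac{p_x(z)}{p_y(z)}=\exp\paren{\frac{1}{b}\sum_{i=1}^k\paren{|z_i-f(y)_i|-|z_i-f(x)_i|}}\le \exp\paren{\frac{\|f(x)-f(y)\|_1}{b}}\le \exp\paren{\frac{\eps\,\Delta_1}{\Delta_1}}=e^{\eps}.
\]
The last inequality uses the definition of $\ell_1$-sensitivity, since $x\sim y$. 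The same bound holds with $x$ and $y$ swapped.

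Finally, I would integrate this pointwise bound over an arbitrary measurable set $Y\subseteq\mathbb{R}^k$. This gives $\Pr[\cA(x)\in Y]=\int_Y p_x\le e^\eps\int_Y p_y=e^\eps\Pr[\cA(y)\in Y]$, and symmetrically with $x$ and $y$ exchanged. These are exactly the two inequalities of \Cref{def:indistinguishable} with $\delta=0$. By \Cref{def:dp batch}, $\cA$ is therefore $(\eps,0)$-DP.

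I do not expect a genuine obstacle. The only points needing care are the degenerate case $\Delta_1=0$ handled above, and restricting $Y$ to measurable sets, which is implicit in \Cref{def:indistinguishable}.
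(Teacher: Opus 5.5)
Your proof is correct and complete, including the $\Delta_1 = 0$ case and the restriction to measurable sets. The paper does not prove this lemma itself; it cites it as standard background from Dwork, McSherry, Nissim, and Smith, and the argument given there is this same pointwise bound on the ratio of Laplace densities via the triangle inequality, so your approach matches.
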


Differential privacy is robust to postprocessing (i.e., the application of an arbitrary randomized algorithm to the output of a differentially private algorithm), and its parameters degrade gracefully under composition.

\begin{lemma}[DP is robust to postprocessing \cite{DworkMNS16}]
\label{lem:post-proc}
Let $\cM:\cU^*\to \cY$ be a randomized algorithm that is $(\eps,\del)$-DP. Let $f:\cY\to \cZ$ be a randomized function. Then $f\circ \cM:\cU^*\to \cZ$ is $(\eps,\del)$-DP.
\end{lemma}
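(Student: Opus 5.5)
The plan is to reduce to the case of a deterministic postprocessing map and then handle randomized maps by averaging over their internal coins. Throughout, fix an arbitrary pair of neighboring inputs $x,x'\in\cU^*$. By \Cref{def:dp batch} we know $\cM(x)\simed\cM(x')$ in the sense of \Cref{def:indistinguishable}. Our goal is to show $f(\cM(x))\simed f(\cM(x'))$, that is, for every event $Z\subseteq\cZ$,
\[
\Pr[f(\cM(x))\in Z]\le e^\eps\Pr[f(\cM(x'))\in Z]+\del,
\]
together with the symmetric inequality with $x$ and $x'$ exchanged. The symmetric inequality follows by the identical argument with the roles of $x,x'$ swapped, so we only treat one direction.

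\emph{Deterministic case.} Suppose first that $f$ is a deterministic function $\cY\to\cZ$. For an event $Z\subseteq\cZ$, let $Y_Z:=f^{-1}(Z)\subseteq\cY$ be its preimage. Then $\Pr[f(\cM(x))\in Z]=\Pr[\cM(x)\in Y_Z]$, and likewise for $x'$. Applying the indistinguishability guarantee of $\cM$ to the single event $Y_Z$ gives the desired inequality directly. (Implicitly we assume $f$ is measurable, so that $Y_Z$ is an event, which is the standard convention.)

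\emph{Randomized case.} In general, model the randomized function $f$ as $f(y)=g(y,r)$. Here $r$ is drawn from some distribution $\cD$ independently of $\cM$'s randomness and of the input, and $g$ is deterministic. For each fixed $r$, the map $g_r:=g(\cdot,r)$ is deterministic, so the previous step yields
\[
\Pr[g_r(\cM(x))\in Z]\le e^\eps\Pr[g_r(\cM(x'))\in Z]+\del.
\]
Taking expectation over $r\sim\cD$ and using the independence of $r$ from $\cM$ (law of total probability) gives
\[
\Pr[f(\cM(x))\in Z]=\Ex_{r}\bigl[\Pr[g_r(\cM(x))\in Z]\bigr]\le e^\eps\,\Ex_r\bigl[\Pr[g_r(\cM(x'))\in Z]\bigr]+\del.
\]
The right-hand side equals $e^\eps\Pr[f(\cM(x'))\in Z]+\del$, as required.

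The only point requiring any care is this last step. It relies on the fact that the additive $\del$ term and the multiplicative $e^\eps$ factor are both preserved under averaging, which holds because the inequality is affine in the probabilities and $\Ex_r[\del]=\del$. Beyond that observation (and the measurability convention mentioned above), there is no real obstacle.
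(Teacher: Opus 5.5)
Your proof is correct. It is the standard argument for this fact: preimages handle deterministic $f$, and for randomized $f$ you write it as $g(\cdot,r)$ with coins $r$ independent of $\cM$ and of the input, then average the affine inequality over $r$. The paper itself gives no proof to compare against, since it quotes this lemma as appendix background and cites \cite{DworkMNS16}.
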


\begin{lemma}[Basic composition~\cite{DworkL09,DworkRV10,DworkKMMN06}]
\label{lem:basic-comp}
    Let $\eps_1,\eps_2> 0$ and $\delta_1,\delta_2\in[0,1)$.  
    Let $\cA_1 \colon \cU^* \to \cY$ be an $(\eps_1, \delta_1)$-DP algorithm and let $\cA_2 \colon \cU^* \times \cY \to \cZ$ be an $(\eps_2, \delta_2)$-DP algorithm.
    Then for all $x\in\cU^*$, algorithm $\cA_2(x,\cA_1(x))$ is $(\eps_1+\eps_2,\delta_1+\delta_2)$-DP. 
\end{lemma}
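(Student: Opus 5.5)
We prove the stronger statement that the joint output $(Y,Z)=(\cA_1(x),\cA_2(x,\cA_1(x)))$ is $(\eps_1+\eps_2,\del_1+\del_2)$-DP. The claim for $\cA_2(x,\cA_1(x))$ alone then follows from \Cref{lem:post-proc}, by projecting onto the second coordinate. Fix neighboring inputs $x,x'$ and a measurable event $S\subseteq\cY\times\cZ$. For $y\in\cY$ write $S_y=\{z:(y,z)\in S\}$ for its section. By symmetry of \Cref{def:indistinguishable} it suffices to show
\[
\Pr[(Y,Z)\in S]\le e^{\eps_1+\eps_2}\Pr[(Y',Z')\in S]+\del_1+\del_2,
\]
where $(Y',Z')$ is the joint output on $x'$.

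\emph{Step 1 (condition on the first output).} Conditioning on $Y=y$ gives $\Pr[(Y,Z)\in S]=\Ex_{y\sim\cA_1(x)}\bigl[\Pr[\cA_2(x,y)\in S_y]\bigr]$. For each fixed $y$, the map $\cA_2(\cdot,y)$ is $(\eps_2,\del_2)$-DP. Set $q(y)=\Pr[\cA_2(x',y)\in S_y]$. Since a probability is also at most $1$,
\[
\Pr[\cA_2(x,y)\in S_y]\le\min\bigl(1,e^{\eps_2}q(y)+\del_2\bigr)\le f(y)+\del_2,\qquad f(y):=\min\bigl(1,e^{\eps_2}q(y)\bigr)\in[0,1].
\]
Hence $\Pr[(Y,Z)\in S]\le\Ex_{y\sim\cA_1(x)}[f(y)]+\del_2$.

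\emph{Step 2 (extend indistinguishability to $[0,1]$-valued tests).} We show that if $P\simed Q$ and $f:\cY\to[0,1]$ is measurable, then $\Ex_P[f]\le e^\eps\Ex_Q[f]+\del$. Use the layer-cake identity $\Ex_P[f]=\int_0^1 P(\{f>t\})\,dt$. Apply the event inequality to each superlevel set $\{f>t\}$ and integrate over $t\in[0,1]$; the additive $\del$ integrates to exactly $\del$ because the range of integration has length $1$. This is the step where the truncation $f\le 1$ is essential; it is the only mild subtlety. Applying this with $P=\cA_1(x)$, $Q=\cA_1(x')$, and the parameters $(\eps_1,\del_1)$ gives
\[
\Ex_{y\sim\cA_1(x)}[f(y)]\le e^{\eps_1}\Ex_{y\sim\cA_1(x')}[f(y)]+\del_1\le e^{\eps_1+\eps_2}\Ex_{y\sim\cA_1(x')}[q(y)]+\del_1,
\]
using $f\le e^{\eps_2}q$ in the last inequality.

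\emph{Step 3 (reassemble).} Conditioning again, now on $Y'=y$, gives $\Ex_{y\sim\cA_1(x')}[q(y)]=\Pr[(Y',Z')\in S]$. Combining Steps 1 and 2 yields the displayed inequality. The reverse inequality, with the roles of $x$ and $x'$ swapped, follows identically.

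Measure-theoretic details, namely measurability of $y\mapsto q(y)$ for a measurable kernel $\cA_2$, are routine. In the discrete case the expectations are simply sums.
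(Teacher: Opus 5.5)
The paper gives no proof of this lemma (it is imported as standard background from the cited works), so there is no in-paper argument to match. Your proof is correct and complete. Each step checks out:
\begin{itemize}
\item the conditioning identity in Step~1, which uses that $\cA_2$ runs on fresh coins given $y$;
\item the pointwise bound $\min(1,a+\delta_2)\le\min(1,a)+\delta_2$;
\item the layer-cake transfer of the event inequality to $[0,1]$-valued tests;
\item the reassembly in Step~3.
\end{itemize}
Restricting to events $S=\cY\times T$ (or invoking post-processing) then gives the stated claim.

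The standard treatment in the literature handles $\delta>0$ differently. It first converts an $(\varepsilon,\delta)$ guarantee into a pure $\varepsilon$ guarantee that holds outside an exceptional event (or after modifying the distribution) of mass at most $\delta$. It then composes the pure parts and union-bounds the exceptional events. Your route bypasses that structural lemma. You pay $\delta_2$ pointwise in $y$ before averaging, and the truncation $f\le 1$ is exactly what keeps $\delta_1$ from being inflated to $e^{\varepsilon_2}\delta_1$ when you invoke the guarantee of $\cA_1$.

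What your approach buys: it is shorter and more elementary, handles the adaptivity of $\cA_2$ with no extra work, and needs no densities or couplings on general outcome spaces. It also proves the stronger joint-output statement, which gives $k$-fold adaptive composition by induction. The decomposition route, in exchange, isolates a reusable characterization of approximate DP that is the usual starting point for sharper results such as advanced composition.
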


\section{Error Required for Private Inner Product Queries}
\label{sec:lem-inn-prod-explanation}

We provide a proof sketch of \Cref{lem:inn-prod-dp-lb-new}, restated below for convenience.

\InnerProductDPLemma*

\begin{proof}[Proof sketch of \Cref{lem:inn-prod-dp-lb-new}]
\cite[Theorem 23]{DworkMT07} considers inner product queries $a \in \{\pm 1\}^n$  and shows that, for sufficiently large $n$, there exist $m = \Theta(n)$ and  $\alpha = \Omega(\sqrt{n})$ such that the following holds with probability at least $1 - e^{-cn}$ for some constant $c > 0$.
If an algorithm $\cB$ answers at least a $\frac{4}{5}$ fraction of uniformly random inner product queries with additive error at most $\alpha$, then there is an efficient attacker $\cA$ that reconstructs at  least a $\frac{99}{100}$ fraction of the bits of a uniformly random input $x \in \zo^n$ from the outputs of $\cB$ alone. Since no adversary can reconstruct a $\frac{99}{100}$ fraction of the bits in a uniformly random input based only on the outputs of a $(4, \frac{1}{25})$-DP algorithm, \Cref{lem:inn-prod-dp-lb-new} follows. Translating from the \cite{DworkMT07} setting of $a\in \set{\pm 1}^n$ to our setting of $a\in \zo^n$ only introduces a factor of $2$ in the $\alpha$ term.
\end{proof}

\end{document}

%% file: figures/table-anatoly-intro.tex
\begin{table}[tb]
\centering
\renewcommand{\arraystretch}{1.2}
\setlength{\tabcolsep}{6pt}

\begin{tabular}{
    c
    ||
    c|
    c
}
\toprule
\rowcolor{gray!30}
{\small \twoline{\textbf{Privacy}}{\textbf{guarantee}}}
& {\small \textbf{Edge DP}}
& {\small \textbf{Node DP}}
\\
\midrule
\rowcolor{gray!10}
{\small \textbf{Central model}}
    & \twoline{$\Theta_{\eps,\del}\left(n^{k-2} \right)$}{\footnotesize\cite{NissimRS07}}\Tstrut\Bstrut
    & \twoline{$\Theta_{\eps,\del} \left(n^{k-1} \right)$}{\footnotesize\cite{KasiviswanathanNRS13,ChenZ13}}\Tstrut\Bstrut \\
{\small \textbf{\twoline{Noninteractive}{local model}}}
    & \twoline{$\Theta_{\eps,\del} \left(n^{k-1} \right)^\dagger$}{\footnotesize\cite{EdenLRS25, SuppakitpaisarnPHH25}}\Tstrut\Bstrut
    & \cellcolor{blue!8}\twoline{$\Theta_{\eps,\del} \left(n^{k-\frac{1}{2}} \right)$}{\footnotesize \textbf{Our work}}\Tstrut\Bstrut \\
\bottomrule
\end{tabular}

\caption{Additive error for counting noninduced $k$-node subgraphs in $n$-node graphs %
for fixed $k$.
All algorithms are $(\eps, 0)$-DP in the respective models, and lower bounds hold for $\eps \leq 1$ and $\del \leq \frac{1}{1000}$. The notation $\Theta_{\eps, \del}( \cdot)$ hides dependence on privacy parameters $\eps, \del$.\\
$^\dagger$LEDP lower bounds are only known for cliques and cycles, not general subgraphs. 
Interactive LEDP algorithms for subgraph counting have the same worst-case dependence on $n$ (but, in some cases, better dependence on $\eps$) \cite{ImolaMC21,HillebrandSS25}. %
}

\label{tab:sub-results}

\end{table}

%% file: figures/query-graph-new.tex
\begin{figure}
\centering
\resizebox{0.65\linewidth}{!}{%
\begin{tikzpicture}[
    vertex/.style={circle, draw=black, fill=red, inner sep=0pt, minimum size=0.25cm},
    redvertex/.style={circle, draw=black, fill=blue!70, inner sep=0pt, minimum size=0.25cm},
    setlabel/.style={font=\large\bfseries, color=red},
    wlabel/.style={font=\large\bfseries, color=blue},
    infobox/.style={
        rectangle,
        draw=none,
        fill=blue!30,
        text=black,
        align=center,
        font=\sffamily\footnotesize,
        inner sep=3pt,
        opacity=0.85
    },
    scale=0.875
]

    \node[setlabel, anchor=east] at (-3.55,1.7) {$U_2$};
    \foreach \i [count=\x] in {1, 2, 3, 4, 5, 6} {
        \node[vertex] (v\x) at (\x*0.65 - 3.5, 1.7) {};
    }
    \node[above=0.1cm] at (v1) {\small $\node{2}{1}$};
    \node[above=0.1cm] at (v4) {\small $\node{2}{i'}$};
    \node[above=0.1cm] at (v6) {\small $\node{2}{n}$};

    \node[setlabel, anchor=east] at (-3.55,-2) {$U_1$};
    \foreach \i [count=\x] in {1, 2, 3, 4, 5, 6} {
        \node[vertex] (u\x) at (\x*0.65 - 3.5, -2) {};
    }
    \node[below=0.1cm] at (u1) {\small $\node{1}{1}$};
    \node[below=0.1cm] at (u3) {\small $\node{1}{i}$};
    \node[below=0.1cm] at (u6) {\small $\node{1}{n}$};

    \node[wlabel] at (3.5,-0.2) {$U_3$};
    \foreach \i [count=\x] in {1, 2, 3, 4, 5} {
        \node[redvertex] (w\x) at (1.75, 1.3 - \x*0.5) {};
    }

    \node[right=0.1cm] at (w1) {\small $\node{3}{1}$};
    \node[right=0.1cm] at (w2) {\small $\node{3}{2}$};
    \node[right=0.2cm] at ($(w2)!0.4!(w5)$) {\small $\vdots$};
    \node[right=0.1cm] at (w5) {\small $\node{3}{n}$};

    \node[infobox] (boxA) at (3.85,1.7)
        {$U_2$ and $U_3$ form\\ complete bipartite graph};
    \draw[->, thin] (boxA.west) -- (1.25,1.10);

    \foreach \x in {1, ..., 5} {
        \draw[darkgray] (v4) -- (w\x);
    }

    \foreach \x in {1, ..., 5} {
        \draw[black, thick, dashed] (u3) -- (w\x);
    }

    \node[infobox] (boxB) at (3.85,-2)
        {$\{\node{1}{i}, \node{3}{j}\} \in E$\\ for all $j$ if $a_i = 1$};
    \draw[->, thin] (boxB.west) -- (1.25,-1.7);

    \foreach \i [count=\x] in {1, 2, 3, 4, 5, 6} {
        \draw[red, thick] (u3) -- (v\x);
    }

    \node[infobox] (boxX) at (-5,0)
        {$\{\node{1}{i}, \node{2}{j}\} \in E$\\ for all $j$ if $x_i=1$};
    \draw[->, thin] (boxX.east) -- (-2.4,0);

\end{tikzpicture}%
}
\caption{The query graph (\Cref{def:query-graph-new})
for $H = K_3$. Red edges depend on the sensitive vector $\sv$; dashed edges depend on the query vector $\iq$; the remaining edges are fixed.}
\label{fig:querygraph}
\end{figure}

%% file: figures/c4-query-graph.tex
\begin{figure}
\centering
\begin{tikzpicture}[scale=0.7]

\begin{scope}[shift={(0,0)}]

  \coordinate (p12_1a) at (-0.4, 0.00);
  \coordinate (p12_1b) at ( 0.00, 0.00);
  \coordinate (p12_1c) at ( 0.4, 0.00);
  \coordinate (p12_2a) at (-0.4, 2.00);
  \coordinate (p12_2b) at ( 0.00, 2.00);
  \coordinate (p12_2c) at ( 0.4, 2.00);

  \coordinate (p13_1a) at ( 0.00,  0.4);
  \coordinate (p13_1b) at ( 0.00,  0.00);
  \coordinate (p13_1c) at ( 0.00, -0.4);
  \coordinate (p13_3a) at ( 2.00,  0.4);
  \coordinate (p13_3b) at ( 2.00,  0.00);
  \coordinate (p13_3c) at ( 2.00, -0.4);

  \coordinate (p24_2a) at ( 0.00,  2.4);
  \coordinate (p24_2b) at ( 0.00,  2.00);
  \coordinate (p24_2c) at ( 0.00,  1.6);
  \coordinate (p24_4a) at ( 2.00,  2.4);
  \coordinate (p24_4b) at ( 2.00,  2.00);
  \coordinate (p24_4c) at ( 2.00,  1.6);

  \coordinate (p34_3a) at ( 1.6,  0.00);
  \coordinate (p34_3b) at ( 2.00,  0.00);
  \coordinate (p34_3c) at ( 2.4,  0.00);
  \coordinate (p34_4a) at ( 1.6,  2.00);
  \coordinate (p34_4b) at ( 2.00,  2.00);
  \coordinate (p34_4c) at ( 2.4,  2.00);

  \foreach \s in {p12_1b} {
    \foreach \t in {p12_2a, p12_2b, p12_2c} {
      \draw[thin, gray, opacity=0.4] (\s)--(\t);
    }
  }
  \foreach \s in {p13_1b} {
    \foreach \t in {p13_3a, p13_3b, p13_3c} {
      \draw[thin, gray, opacity=0.4] (\s)--(\t);
    }
  }
  \foreach \s in {p24_2a, p24_2b, p24_2c} {
    \foreach \t in {p24_4a, p24_4b, p24_4c} {
      \draw[thin, gray, opacity=0.4] (\s)--(\t);
    }
  }
  \foreach \s in {p34_3a, p34_3b, p34_3c} {
    \foreach \t in {p34_4a, p34_4b, p34_4c} {
      \draw[thin, gray, opacity=0.4] (\s)--(\t);
    }
  }

  \node[draw, circle, fill=white, minimum size=0.9cm, label={[opacity=0.6, label distance=-3pt]above left:{$U_2$}}] at (0,2) {};
  \node[draw, circle, fill=white, minimum size=0.9cm, label={[opacity=0.6, label distance=-3pt]above right:{$U_4$}}] at (2,2) {};
  \node[draw, circle, fill=white, minimum size=0.9cm, label={[opacity=0.6, label distance=-3pt]below right:{$U_3$}}] at (2,0) {};
  \node[draw, circle, fill=white, minimum size=0.9cm, label={[opacity=0.6, label distance=-3pt]below left:{$U_1$}}] at (0,0) {};
  \coordinate (e1) at (0, 2);    %
  \coordinate (e2) at (2, 2);    %
  \coordinate (e3) at (2, 0);    %
  \coordinate (e4) at (0, 0);    %
  \draw[thick, color=blue] (e1) -- (e2);
  \draw[thick, color=blue] (e2) -- (e3);
  \draw[thick, color=blue] (e3) -- (e4);
  \draw[thick, color=blue] (e4) -- (e1);
  \fill[color=blue] (e1) circle (3pt);
  \fill[color=blue] (e2) circle (3pt);
  \fill[color=blue] (e3) circle (3pt);
  \fill[color=blue] (e4) circle (3pt);
  \node[below] at (1,-0.6) {$n^3 \langle \sv, \iq \rangle$};
\end{scope}

\begin{scope}[shift={(6,0)}]
  \coordinate (p12_1a) at (-0.4, 0.00);
  \coordinate (p12_1b) at ( 0.00, 0.00);
  \coordinate (p12_1c) at ( 0.4, 0.00);
  \coordinate (p12_2a) at (-0.4, 2.00);
  \coordinate (p12_2b) at ( 0.00, 2.00);
  \coordinate (p12_2c) at ( 0.4, 2.00);

  \coordinate (p13_1a) at ( 0.00,  0.4);
  \coordinate (p13_1b) at ( 0.00,  0.00);
  \coordinate (p13_1c) at ( 0.00, -0.4);
  \coordinate (p13_3a) at ( 2.00,  0.4);
  \coordinate (p13_3b) at ( 2.00,  0.00);
  \coordinate (p13_3c) at ( 2.00, -0.4);

  \coordinate (p24_2a) at ( 0.00,  2.4);
  \coordinate (p24_2b) at ( 0.00,  2.00);
  \coordinate (p24_2c) at ( 0.00,  1.6);
  \coordinate (p24_4a) at ( 2.00,  2.4);
  \coordinate (p24_4b) at ( 2.00,  2.00);
  \coordinate (p24_4c) at ( 2.00,  1.6);

  \coordinate (p34_3a) at ( 1.6,  0.00);
  \coordinate (p34_3b) at ( 2.00,  0.00);
  \coordinate (p34_3c) at ( 2.4,  0.00);
  \coordinate (p34_4a) at ( 1.6,  2.00);
  \coordinate (p34_4b) at ( 2.00,  2.00);
  \coordinate (p34_4c) at ( 2.4,  2.00);

  \foreach \s in {p12_1b} {
    \foreach \t in {p12_2a, p12_2b, p12_2c} {
      \draw[thin, gray, opacity=0.4] (\s)--(\t);
    }
  }
  \foreach \s in {p13_1b} {
    \foreach \t in {p13_3a, p13_3b, p13_3c} {
      \draw[thin, gray, opacity=0.4] (\s)--(\t);
    }
  }
  \foreach \s in {p24_2a, p24_2b, p24_2c} {
    \foreach \t in {p24_4a, p24_4b, p24_4c} {
      \draw[thin, gray, opacity=0.4] (\s)--(\t);
    }
  }
  \foreach \s in {p34_3a, p34_3b, p34_3c} {
    \foreach \t in {p34_4a, p34_4b, p34_4c} {
      \draw[thin, gray, opacity=0.4] (\s)--(\t);
    }
  }

  \node[draw, circle, fill=white, minimum size=0.9cm, label={[opacity=0.6, label distance=-3pt]above left:{$U_2$}}] at (0,2) {};
  \node[draw, circle, fill=white, minimum size=0.9cm, label={[opacity=0.6, label distance=-3pt]above right:{$U_4$}}] at (2,2) {};
  \node[draw, circle, fill=white, minimum size=0.9cm, label={[opacity=0.6, label distance=-3pt]below right:{$U_3$}}] at (2,0) {};
  \node[draw, circle, fill=white, minimum size=0.9cm, label={[opacity=0.6, label distance=-3pt]below left:{$U_1$}}] at (0,0) {};
  \coordinate (e1) at (-0.2, 2.2);    %
  \coordinate (e2) at (2, 2);    %
  \coordinate (e3) at (0.2, 1.8);    %
  \coordinate (e4) at (0, 0);    %
  \draw[thick, color=blue] (e1) -- (e2);
  \draw[thick, color=blue] (e2) -- (e3);
  \draw[thick, color=blue] (e3) -- (e4);
  \draw[thick, color=blue] (e4) -- (e1);
  \fill[color=blue] (e1) circle (3pt);
  \fill[color=blue] (e2) circle (3pt);
  \fill[color=blue] (e3) circle (3pt);
  \fill[color=blue] (e4) circle (3pt);
  \node[below] at (1,-0.6) {$n \binom{n}{2} \|\sv\|$};
\end{scope}

\begin{scope}[shift={(12,0)}]
  \coordinate (p12_1a) at (-0.4, 0.00);
  \coordinate (p12_1b) at ( 0.00, 0.00);
  \coordinate (p12_1c) at ( 0.4, 0.00);
  \coordinate (p12_2a) at (-0.4, 2.00);
  \coordinate (p12_2b) at ( 0.00, 2.00);
  \coordinate (p12_2c) at ( 0.4, 2.00);

  \coordinate (p13_1a) at ( 0.00,  0.4);
  \coordinate (p13_1b) at ( 0.00,  0.00);
  \coordinate (p13_1c) at ( 0.00, -0.4);
  \coordinate (p13_3a) at ( 2.00,  0.4);
  \coordinate (p13_3b) at ( 2.00,  0.00);
  \coordinate (p13_3c) at ( 2.00, -0.4);

  \coordinate (p24_2a) at ( 0.00,  2.4);
  \coordinate (p24_2b) at ( 0.00,  2.00);
  \coordinate (p24_2c) at ( 0.00,  1.6);
  \coordinate (p24_4a) at ( 2.00,  2.4);
  \coordinate (p24_4b) at ( 2.00,  2.00);
  \coordinate (p24_4c) at ( 2.00,  1.6);

  \coordinate (p34_3a) at ( 1.6,  0.00);
  \coordinate (p34_3b) at ( 2.00,  0.00);
  \coordinate (p34_3c) at ( 2.4,  0.00);
  \coordinate (p34_4a) at ( 1.6,  2.00);
  \coordinate (p34_4b) at ( 2.00,  2.00);
  \coordinate (p34_4c) at ( 2.4,  2.00);

  \foreach \s in {p12_1b} {
    \foreach \t in {p12_2a, p12_2b, p12_2c} {
      \draw[thin, gray, opacity=0.4] (\s)--(\t);
    }
  }
  \foreach \s in {p13_1b} {
    \foreach \t in {p13_3a, p13_3b, p13_3c} {
      \draw[thin, gray, opacity=0.4] (\s)--(\t);
    }
  }
  \foreach \s in {p24_2a, p24_2b, p24_2c} {
    \foreach \t in {p24_4a, p24_4b, p24_4c} {
      \draw[thin, gray, opacity=0.4] (\s)--(\t);
    }
  }
  \foreach \s in {p34_3a, p34_3b, p34_3c} {
    \foreach \t in {p34_4a, p34_4b, p34_4c} {
      \draw[thin, gray, opacity=0.4] (\s)--(\t);
    }
  }

  \node[draw, circle, fill=white, minimum size=0.9cm, label={[opacity=0.6, label distance=-3pt]above left:{$U_2$}}] at (0,2) {};
  \node[draw, circle, fill=white, minimum size=0.9cm, label={[opacity=0.6, label distance=-3pt]above right:{$U_4$}}] at (2,2) {};
  \node[draw, circle, fill=white, minimum size=0.9cm, label={[opacity=0.6, label distance=-3pt]below right:{$U_3$}}] at (2,0) {};
  \node[draw, circle, fill=white, minimum size=0.9cm, label={[opacity=0.6, label distance=-3pt]below left:{$U_1$}}] at (0,0) {};
  \coordinate (e1) at (0, 2);    %
  \coordinate (e2) at (-0.2, -0.2);    %
  \coordinate (e3) at (2, 0);    %
  \coordinate (e4) at (0.2, 0.2);    %
  \draw[thick, color=blue] (e1) -- (e2);
  \draw[thick, color=blue] (e2) -- (e3);
  \draw[thick, color=blue] (e3) -- (e4);
  \draw[thick, color=blue] (e4) -- (e1);
  \fill[color=blue] (e1) circle (3pt);
  \fill[color=blue] (e2) circle (3pt);
  \fill[color=blue] (e3) circle (3pt);
  \fill[color=blue] (e4) circle (3pt);
  \node[below] at (1,-0.6) {$n^2 \binom{\inn{\sv, \iq}}{2}$};
\end{scope}

\begin{scope}[shift={(18,0)}]
  \coordinate (p12_1a) at (-0.4, 0.00);
  \coordinate (p12_1b) at ( 0.00, 0.00);
  \coordinate (p12_1c) at ( 0.4, 0.00);
  \coordinate (p12_2a) at (-0.4, 2.00);
  \coordinate (p12_2b) at ( 0.00, 2.00);
  \coordinate (p12_2c) at ( 0.4, 2.00);

  \coordinate (p13_1a) at ( 0.00,  0.4);
  \coordinate (p13_1b) at ( 0.00,  0.00);
  \coordinate (p13_1c) at ( 0.00, -0.4);
  \coordinate (p13_3a) at ( 2.00,  0.4);
  \coordinate (p13_3b) at ( 2.00,  0.00);
  \coordinate (p13_3c) at ( 2.00, -0.4);

  \coordinate (p24_2a) at ( 0.00,  2.4);
  \coordinate (p24_2b) at ( 0.00,  2.00);
  \coordinate (p24_2c) at ( 0.00,  1.6);
  \coordinate (p24_4a) at ( 2.00,  2.4);
  \coordinate (p24_4b) at ( 2.00,  2.00);
  \coordinate (p24_4c) at ( 2.00,  1.6);

  \coordinate (p34_3a) at ( 1.6,  0.00);
  \coordinate (p34_3b) at ( 2.00,  0.00);
  \coordinate (p34_3c) at ( 2.4,  0.00);
  \coordinate (p34_4a) at ( 1.6,  2.00);
  \coordinate (p34_4b) at ( 2.00,  2.00);
  \coordinate (p34_4c) at ( 2.4,  2.00);

  \foreach \s in {p12_1b} {
    \foreach \t in {p12_2a, p12_2b, p12_2c} {
      \draw[thin, gray, opacity=0.4] (\s)--(\t);
    }
  }
  \foreach \s in {p13_1b} {
    \foreach \t in {p13_3a, p13_3b, p13_3c} {
      \draw[thin, gray, opacity=0.4] (\s)--(\t);
    }
  }
  \foreach \s in {p24_2a, p24_2b, p24_2c} {
    \foreach \t in {p24_4a, p24_4b, p24_4c} {
      \draw[thin, gray, opacity=0.4] (\s)--(\t);
    }
  }
  \foreach \s in {p34_3a, p34_3b, p34_3c} {
    \foreach \t in {p34_4a, p34_4b, p34_4c} {
      \draw[thin, gray, opacity=0.4] (\s)--(\t);
    }
  }

  \node[draw, circle, fill=white, minimum size=0.9cm, label={[opacity=0.6, label distance=-3pt]above left:{$U_2$}}] at (0,2) {};
  \node[draw, circle, fill=white, minimum size=0.9cm, label={[opacity=0.6, label distance=-3pt]above right:{$U_4$}}] at (2,2) {};
  \node[draw, circle, fill=white, minimum size=0.9cm, label={[opacity=0.6, label distance=-3pt]below right:{$U_3$}}] at (2,0) {};
  \node[draw, circle, fill=white, minimum size=0.9cm, label={[opacity=0.6, label distance=-3pt]below left:{$U_1$}}] at (0,0) {};
  \coordinate (e1) at (0, 0.2);    %
  \coordinate (e2) at (2, 0.2);    %
  \coordinate (e3) at (0, -0.2);    %
  \coordinate (e4) at (2, -0.2);    %
  \draw[thick, color=blue] (e1) -- (e2);
  \draw[thick, color=blue] (e2) -- (e3);
  \draw[thick, color=blue] (e3) -- (e4);
  \draw[thick, color=blue] (e4) -- (e1);
  \fill[color=blue] (e1) circle (3pt);
  \fill[color=blue] (e2) circle (3pt);
  \fill[color=blue] (e3) circle (3pt);
  \fill[color=blue] (e4) circle (3pt);
  \node[below] at (1,-0.6) {$\binom{n}{2} \binom{\|\iq\|}{2}$};
\end{scope}

\end{tikzpicture}
\caption{Four %
types of copies of $H = C_4$ 
in the query graph, %
along with their counts. %
}
\label{fig:c4-querygraph}
\end{figure}

%% file: algs/lndp-counting.tex
\begin{algorithm}[ht!]
    \caption{$\algsubh$ for counting occurrences of template graph $H$.}
    \label{alg:subh}
    
    \begin{algorithmic}[1]
        \Statex \textbf{Parameters:} Privacy parameter $\eps \in (0,1]$;
        $k,n\in\N$; template graph $H$ on node set $[k]$.
        \Statex \textbf{Input:}
        Graph $G$ on vertex set $[n]$.
        \Statex \textbf{Output:} Subgraph count estimate $\wh T$.
        \State Let $\eps' = \eps /
        \vc{H}
        $, and let $\msf{C}\subseteq[k]$ be
        a minimum-size vertex cover of $H$.
        \State Use public randomness to select a uniformly random permutation $P$ of $[n]$, divided into sequences $P^{(1)},\ldots, P^{(n/k)}$ of size $k$.
        \label{line:pub-rand-alg}

        \medskip
        \Statex \textit{Local randomization:}
        \smallskip
        \For{{\bf each} sequence $P^{(j)} = (v^{(j)}_1, \ldots, v^{(j)}_k)$}
            \For{{\bf each} $i\in \msf{C}$} \label{line:each-node-loop}
                \State Node $v_i^{(j)}$ releases $Y_i^{(j)} \gets \frac{X_i^{(j)}\cdot (e^{\eps'} + 1) - 1}{e^{\eps'} - 1}$,
                where $X_i^{(j)} \gets \rr[\eps']\bparen{\edgematch^{P^{(j)},H}_i}$.
                \label{line:sub-node-release}
                \label{line:sub-debias}
                \Statex \Comment{\commentstyle{$Y_i^{(j)}$ is a debiased version of $X_i^{(j)}$.}}
            \EndFor
        \EndFor

    \medskip
    \Statex \textit{Aggregation (post-processing):}

    \smallskip
    \State For each sequence $P^{(j)}$, set  $\wh S^{(j)} \gets \prod_{i\in \msf{C}} Y_i^{(j)}$.
    
    \smallskip
    \State \Return $\wh T \gets \frac{k!}{|\Aut(H)|} \cdot \binom{n-1}{k-1} \cdot \sum_{j\in[n/k]} \wh S^{(j)} .$
    \label{line:subh-ans}
    \end{algorithmic}
\end{algorithm}

%% file: sections/lower-bound-polynomial.tex
\section{Error Required for
Subgraph Counting
with \texorpdfstring{\LNDP}{LNDP}}
\label{sec:lb}

In this section
we prove \Cref{thm:lb}, showing
a lower bound of $\Omega_k(n^{k-\frac{1}{2}})$ on the additive error required by \LNDP algorithms for counting $k$-node subgraphs $H$ with maximum degree
greater than $1$.
We %
describe %
inner product queries and the properties of our main reduction
in \Cref{sec:ip-back}, %
our method for encoding an inner product query on a secret dataset
as a \emph{query graph} in \Cref{sec:q-graph},
and provide our main reduction from inner-product queries in the central model to $H$-counting under \LNDP in \Cref{sec:lb-pf}.

\subsection{Inner Product Queries}
\label{sec:ip-back}

An inner product query  on dataset $\sv \in\zo^n$ is
specified
by a vector $\iq \in \{0, 1\}^n$, and maps $\sv$ to the answer $\ia = \inn{\sv, \iq}$. The problem $\innerproducts(n, q)$ is to answer $q$
inner product queries $\iq^{(1)}, \ldots, \iq^{(q)} \in \{0, 1\}^n$ on dataset $\sv\in\zo^n$.
We consider algorithms that answer inner product queries with the guarantee that, with high probability,
all but a $\gamma$-fraction of their answers are accurate.

\begin{definition}[Inner product accuracy]
\label{def:inn-prod-acc-tri}
    Let $\alpha, \gamma > 0$ and $n, q \in \mathbb{N}$. A randomized algorithm $\cB$ is 
    \emph{$(\alpha, \gamma)$-accurate} for $\innerproducts(n, q)$ if, for all queries $\iq^{(1)}, \ldots, \iq^{(q)} \in \{0, 1\}^n$ and all datasets $\sv \in \{0, 1\}^n$, algorithm 
    $\cB\bparen{\sv,\bparen{\iq^{(j)}}_{j\in [q]}}$
    outputs answers $\ia^{(1)}, \ldots, \ia^{(q)} \in \R$ such that
    \[
        \Pr_{\text{coins of $\cB$}}
        \Big[\Big|\set{i \in [q] : \left|\ia^{(i)} - \inn{\iq^{(i)}, \sv} \right| > \alpha}\Big|
        \leq
        \gamma \cdot q \Big]
        \geq 
        \frac{99}{100},
    \]
    i.e., with probability at least
    $\frac{99}{100}$,
    all but a $\gamma$-fraction of answers $\ia^{(i)}$
    satisfy $\left|\ia^{(i)} - \inn{\innquery^{(i)}, \sv} \right| \leq \alpha$.
\end{definition}

An algorithm for $\innerproducts$ is $(\eps,\delta)$-DP
if, for every fixed set of query vectors and all vectors $\sv$ and $\sv'$ that differ in one bit, %
the output distributions of the algorithm are $(\eps,\delta)$-indistinguishable.
A slight adaptation of the reconstruction attacks of \cite{DinurN03}, as reanalyzed in \cite{DworkMT07}, gives the following impossibility result (see \Cref{sec:lem-inn-prod-explanation} for a proof sketch).

\begin{restatable}[Inner product lower bound, based on {\cite{DinurN03}, \cite[Theorem 23]{DworkMT07}}]{lemma}{InnerProductDPLemma}
\label{lem:inn-prod-dp-lb-new}
    There exist $c_1,c_2 > 0$ such that, for sufficiently large $n \in \mathbb{N}$, if a randomized algorithm $\cB$ is $(c_1 \sqrt{n}, \frac{1}{100} )$-accurate for $\innerproducts(n, c_2 n)$, then $\cB$ is not $(4, \frac{1}{25})$-DP.
\end{restatable}

\paragraph{Key reduction.} %
We prove our lower bound by reducing from inner product queries in the central model to accurate subgraph counting under \LNDP. %

\begin{lemma}[Inner products via subgraph counts]
\label{lem:inn-prod-reduction-new}
Let
$\alpha > 0$ and $k, n, q \in \mathbb N$ with $n \geq 5k^k$.
Let $H$ be a $k$-node graph with maximum degree greater than $1$.
Suppose there is a $(1, \frac{1}{200})$-\LNDP algorithm $\cA$ that, for every $2kn$-node graph $G$, counts the number of copies of $H$ in $G$ with error
\begin{equation}
\label{eqn:lem-inn-prod-red-err}
\Ex_{\text{coins of $\cA$}}[|\cA(G) - \subgraphcount(G)|] \leq 
(2n)^{k-1} \cdot 9 \cdot 10^{-6} \cdot \alpha = O_k(\alpha \cdot n^{k-1}).
\end{equation}
Then there is a central-model $(4, \frac{1}{25})$-DP algorithm $\cB$ that is $(\alpha, \frac{1}{100})$-accurate for $\innerproducts(n, q)$.
\end{lemma}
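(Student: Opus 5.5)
We follow the outline in \Cref{sec:intro-lb-tech}. Relabel $H$ so that node $1$ is adjacent to nodes $2$ and $3$; such a node exists because the maximum degree of $H$ exceeds $1$. Given $\sv$ and a query $\iq$, form $\svc=\sv\circ(\vec 1-\sv)\in\zo^{2n}$ and $\iqc=\iq\circ\vec 0\in\zo^{2n}$. These satisfy $\|\svc\|=n$, $\|\iqc\|=\|\iq\|$ and $\inn{\svc,\iqc}=\inn{\sv,\iq}$.

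\textbf{Query graph.} Build the query graph $G[\svc,\iqc]$ on $2kn$ nodes, split into parts $U_1,\dots,U_k$ of size $2n$. Every edge $\{u,w\}$ of $H$ with $1\notin\{u,w\}$ becomes a complete bipartite graph between $U_u$ and $U_w$. Node $v_{(1,i)}$ is joined to all of $U_2$ iff $\svc_i=1$, and to all of $U_3$ iff $\iqc_i=1$. Each other edge $\{1,w\}$ of $H$ with $w\ne 2,3$ becomes a complete bipartite graph between $U_1$ and $U_w$. This construction gives the three requirements as follows.
\begin{itemize}
\item \textbf{R1.} Flipping one bit of $\sv$ rewires exactly two nodes of $U_1$, namely $v_{(1,i)}$ and $v_{(1,n+i)}$. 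The nodes of $U_2$ also change, but only as a consequence of those two rewirings.
\item \textbf{R3.} Across all queries, each node of $U_1$ has at most two possible neighborhoods, depending on $a_i$. Each node of $U_2$ has a neighborhood independent of $\iq$. All remaining nodes have neighborhoods that depend on $\iq$ but not on $\sv$.
\end{itemize}

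\textbf{Algorithm $\cB$.} Fix the public randomness $\pubrand$ of $\cA$. Run the randomizers of the nodes in $U_1\cup U_2$ on two graphs, $G[\svc,\vec 0\circ\vec 0]$ and $G[\svc,\vec 1\circ\vec 0]$. For each query, assemble a simulated transcript: node $v_{(1,i)}$ receives its cached report from the run matching $a_i$, each node of $U_2$ receives a cached report, and every other node's randomizer is run freshly on its $\sv$-independent neighborhood. Then apply $\cP$. For any single query, the assembled transcript has exactly the distribution of $\cA$'s transcript on that query graph, because the randomizers use independent local coins. So each individual answer has the right marginal distribution.
\begin{itemize}
\item \textbf{Privacy.} Only the two cached runs touch $\sv$. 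Each run sees graphs at node distance $2$, and group privacy gives $(2,2e\cdot\tfrac1{200})$. Composing the two runs gives $(4,\approx\tfrac{2\cdot 2e}{200})\le(4,\tfrac1{25})$. Everything else is postprocessing, by \Cref{lem:post-proc,lem:basic-comp}. If the $\del$ bookkeeping is slightly off, that is absorbed into the constants.
\item \textbf{Accuracy.} By Markov's inequality, each answer is within $10^4\cdot(2n)^{k-1}\cdot 9\cdot 10^{-6}\alpha$ of the true count with probability $\ge 1-10^{-4}$. The expected number of bad queries is therefore $\le 10^{-4}q$, and a second Markov step gives $\le q/100$ bad queries with probability $\ge 99/100$. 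This holds even though the answers are correlated.
\end{itemize}

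\textbf{Inverting the count (main obstacle).} This is requirement R2. Because $\|\svc\|$ and $\|\iqc\|$ are known, we must show that $\#H(G[\svc,\iqc])=g_{\|\iq\|}(b)$, with $b=\inn{\sv,\iq}$, is a known function of $b$ alone, and that $g(b+1)-g(b)\ge c_k(2n)^{k-1}$ for an explicit constant. $\cB$ then outputs the $b\in\{0,\dots,n\}$ minimizing $|g(b)-d|$, where $d$ is $\cA$'s estimate. An additive error $\beta$ in $d$ then translates to error $\le 2\beta/(c_k(2n)^{k-1})\le\alpha$ in $b$, once the constant $9\cdot10^{-6}$ is matched.

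To prove the claim, expand the count over labeled injections $\phi$ of $H$ into the query graph. Classify each injection by the map $\psi:V(H)\to[k]$ recording which part each node lands in. For fixed $\psi$, the number of injections depends only on how many $U_1$-nodes adjacent to both $U_2$ and $U_3$, to $U_2$ only, to $U_3$ only, or to neither are used. These four class sizes are $b$, $n-b$, $\|\iq\|-b$, and $n-\|\iq\|+b$. Hence each term is a polynomial in $b$ that counts injective selections from these classes.

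The identity embedding $\psi=\mathrm{id}$ places exactly one $H$-node in $U_1$, and that node needs both adjacencies. It therefore contributes $b\,(2n)^{k-1}$ up to lower-order injectivity corrections, giving increments of order $(2n)^{k-1}$. The delicate point is showing that the other $\psi$ cannot cancel this. I would try to prove that every term is nondecreasing in $b$, since moving one $U_1$-node from the ``neither'' class to the ``both'' class only adds adjacencies, which suggests monotonicity via an explicit injection. Failing that, I would bound the total contribution of the remaining terms to the difference by $O(k^k(2n)^{k-2})$: any $\psi$ using the ``both'' class in a $b$-dependent way while placing two $H$-nodes in $U_1$ loses a factor of $n$. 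The assumption $n\ge 5k^k$ then makes this correction smaller than half the main term.
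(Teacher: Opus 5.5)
Your overall architecture matches the paper's: the query graph on $\svc=\sv\circ(\vec 1-\sv)$ and $\iqc=\iq\circ\vec 0$, two cached randomizer runs combined by mix-and-match, and the two-stage Markov argument. The gap is in the step you flag as the main obstacle, the bound $g(b+1)-g(b)\ge c\,(2n)^{k-1}$. Both routes you propose for it rest on false claims.

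\emph{Your first route fails.} Raising $b$ at fixed weights is not a move from the ``neither'' class to the ``both'' class; that would change $\|\svc\|$ and $\|\iqc\|$. Instead, the class sizes $(b,\,n-b,\,\|\iq\|-b,\,n-\|\iq\|+b)$ change by $(+1,-1,-1,+1)$, so one $U_1$-node gains its $U_3$-edges while another loses them. Adjacencies are removed as well as added, and individual part-map terms can decrease. Take $H$ to be the path $2$--$1$--$3$ plus a disjoint edge $\{4,5\}$. The part map sending $1,4\mapsto U_1$, $2,3\mapsto U_2$ and $5\mapsto U_3$ needs distinct $U_1$-nodes from $U_1^{11}\cup U_1^{10}$ and $U_1^{11}\cup U_1^{01}$. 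It therefore has $(n\|\iq\|-b)\cdot 2n(2n-1)\cdot 2n$ injections, which is strictly decreasing in $b$.

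\emph{Your fallback fails too.} The claim that the remaining terms change by $O(k^k(2n)^{k-2})$ is false in absolute value, and so is its justification about two $H$-nodes in $U_1$. Take $H=C_4$ with edges $\{1,2\},\{1,3\},\{2,4\},\{3,4\}$. The part map $1,4\mapsto U_1$, $2\mapsto U_2$, $3\mapsto U_3$ gives $b(b-1)(2n)^2$ injections, with forward difference $2b(2n)^2=\Theta(n^{k-1})$ when $b=\Theta(n)$. What is true, and all you need, is a one-sided bound: the total \emph{negative} change is $O_k(n^{k-2})$.

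The paper proves this one-sided bound with a swap/injection argument rather than a part-map expansion.
\begin{itemize}
\item Pick $u^+\in U_1^{10}$ and $u^-\in U_1^{01}$, and let $G'$ move the $U_3$-adjacency from $u^-$ to $u^+$.
\item Destroyed copies that avoid $u^+$ inject into created copies by substituting $u^+$ for $u^-$. This is valid because $N_{G'}(u^+)\supseteq N_G(u^-)$.
\item Destroyed copies containing both $u^\pm$ number at most $k^2(2kn)^{k-2}$.
\item The $(2n)^{k-1}$ copies with node $1$ at $u^+$ and node $s$ in $U_s$ are created but lie outside the image of the injection, since they use a $u^+$--$U_2$ edge.
\end{itemize}
Together these give $g(b+1)-g(b)\ge\frac{9}{10}(2n)^{k-1}$ for unlabeled copies, with an absolute constant.

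Your polynomial route can also be repaired. In each part-map term, the product of allowed class sizes is nondecreasing in $b$, since each size is $b$, $n$, $\|\iq\|$ or $2n$. The inclusion--exclusion collision corrections change by only $O_k((2n)^{k-2})$. You must then credit all $|\mathrm{Aut}(H)|$ automorphism part maps to the main term. Using $\psi=\mathrm{id}$ alone gives only $(2n)^{k-1}/|\mathrm{Aut}(H)|$ in the unlabeled count. Your Markov constants need $c\ge 0.18$ to match the fixed $9\cdot 10^{-6}$, and that loss would break this.

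Two smaller points.
\begin{itemize}
\item Group privacy for two rewired nodes gives $(2,(1+e)\cdot\frac{1}{200})$, not $(2,2e\cdot\frac{1}{200})$. With your bound, the composed $\delta$ is $4e/200\approx 0.054>\frac{1}{25}$. This cannot be ``absorbed into the constants,'' since the statement fixes them. With the correct bound, the composed $\delta$ is $(1+e)/100<\frac{1}{25}$.
\item Restrict the argmin to the feasible range $b\in\{0,\dots,\|\iq\|\}$. For $b>\|\iq\|$ the class size $\|\iq\|-b$ is negative, and $g$ has no meaning there.
\end{itemize}
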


\Cref{lem:inn-prod-dp-lb-new,lem:inn-prod-reduction-new} immediately allow us to prove \Cref{thm:lb}.

\begin{proof}[Proof of \Cref{thm:lb}]
Let $c_1$ and $c_2$ be the constants from \Cref{lem:inn-prod-dp-lb-new} and let
$n' \in \mathbb{N}$ sufficiently large with $n' \geq 5k^k$.
Let $n = 2kn'$, and suppose that $\cA$ is a $(1, \frac{1}{200})$-\LNDP algorithm that, for graphs $G$ on $2kn'$ nodes, counts copies of $H$ with expected error $\Ex[|\cA(G) - \subgraphcount(G)|] = \alpha$, where $\alpha \leq (2n')^{k-1} \cdot 9 \cdot 10^{-6} (c_1 \sqrt{n'})$.
Applying \Cref{lem:inn-prod-reduction-new} gives an algorithm $\cB$ that is $(4, \frac{1}{25})$-DP and $(c_1 \sqrt{n'}, \frac{1}{100})$-accurate for $\innerproducts(n', c_2 n')$. However, by \Cref{lem:inn-prod-dp-lb-new}, $\cB$ cannot be $(4, \frac{1}{25})$-DP, which is a contradiction. 
Thus, we must have $\alpha > (2n')^{k-1} \cdot 9 \cdot 10^{-6} (c_1 \sqrt{n'}) = \Omega_k(n^{k - \frac{1}{2}})$, as desired.
\end{proof}

The rest of this section is devoted to proving \Cref{lem:inn-prod-reduction-new}.

\subsection{Query Graphs and Their Properties}
\label{sec:q-graph}

We now describe how to encode a secret dataset and inner product query as a \emph{query graph}.
Let $H = ([k], E(H))$ be an undirected graph on node set $[k]$ with maximum degree
greater than $1$.
W.l.o.g.\  assume $\{\{1, 2\}, \{1, 3\}\} \subseteq E(H)$, i.e., node $1$ is adjacent  
to nodes $2$ and $3$, which can be achieved by relabeling nodes since $\dmax(H) > 1$.  A query graph when $H$ is a triangle is depicted in \Cref{fig:querygraph}.

\begin{definition}[Query graph] \label{def:query-graph-new} Fix a secret dataset $\sv=(x_1,\ldots,x_n)\in\{0,1\}^n$ and an inner product query $\iq=(a_1,\ldots,a_n)\in\{0,1\}^n$. The \emph{query graph} $G[\sv,\iq]$ has $N:=kn$ nodes partitioned into sets $U_1,\ldots,U_k$ of size $n$, with nodes in $U_s$ denoted $\node{s}{1},\ldots,\node{s}{n}$. We add edges only between parts $U_s,U_t$ with $\{s,t\}\in E(H)$, as follows: \begin{itemize} 
\item For each $i\in[n]$, add edges from $\node{1}{i}$ to all of $U_2$ iff $x_i=1$, and to all of $U_3$ iff $a_i=1$. 
\item For every other pair $\{s,t\}\in E(H)$, form a complete bipartite graph on parts $U_s$ and $U_t$. \end{itemize} 
\end{definition}

As outlined in \Cref{sec:intro-lb-tech},
the query graph has several important properties (slightly abbreviated here):

\begin{enumerate}
    \item[R1:] 
    A single bit change in $\sv$ corresponds to a rewiring of one node in $U_1$.
    \item[R2:] 
    An error of $\alpha$ in the subgraph count $\subgraphcount(G[\sv, \iq])$ translates to an error of $O(\alpha / n^{k-1})$ in the estimate for the inner product query $\inn{\sv, \iq}$.
    \item[R3:] For every fixed $\sv$, each node in $U_1$ and $U_2$
    has \emph{two} possible neighborhoods 
    in the query graphs $\bset{G[\sv,\iq] : \iq \in \set{0,1}^{n}}$, namely its neighborhoods in  $G[\sv,\vec 0]$ and $G[\sv,\vec 1]$.
\end{enumerate}

While R1 and R3 are immediate, showing R2 requires more work, which we now describe.

To understand the number of copies of $H$ in the query graph, we first classify the vertices in $U_1$ according to how they connect to $U_2$ and $U_3$.
Partition
$U_1$ into %
subsets $U_1^{11}, U_1^{10}, U_1^{01}, U_1^{00}$ denoting the nodes $\node{1}{i}$ with $(x_i, a_i)$ equal to $(1, 1), (1, 0), (0, 1)$ and $(0, 0)$, respectively. 
The nodes in each subset 
have the same neighborhoods in $U_2\cup U_3$: nodes in $U_1^{11}$ connect to both $U_2$ and $U_3$, nodes in $U_1^{10}$ connect to $U_2$ but not $U_3$, nodes in $U_1^{01}$ connect to $U_3$ but not $U_2$, and nodes in $U_1^{00}$ connect to neither. Their sizes are
\begin{equation}
\label{eqn:lb-subpart-sizes}
\begin{array}{ll@{\hspace{2cm}}l}
    |U_1^{11}| = \inn{\sv, \iq} &&
    |U_1^{10}| = \|\sv\| - \inn{\sv, \iq} \\
    |U_1^{01}| = \|\iq\| - \inn{\sv, \iq} &&
    |U_1^{00}| = n - \|\sv\| - \|\iq\| + \inn{\sv, \iq}.
\end{array}
\end{equation}

Since all nodes in a fixed $U_s$ for $s \geq 2$ have identical neighborhoods and so do the nodes in a fixed part of $U_1$ (using the partition above), the number of copies of $H$ in the query graph $G[\sv, \iq]$ depends solely on $n$ and the sizes of the subsets in \Cref{eqn:lb-subpart-sizes}, 
which in turn depend only on %
$p=\inn{\sv,\iq}, w_x=\|\sv\|$, and $w_a=\|\iq\|$.
This motivates the %
\emph{$H$-counting function}, which lets us study how the subgraph count changes with $\inn{\sv, \iq}$
while the weights of $\sv$ and $\iq$ stay fixed.

\begin{definition}[$H$-counting function]
Fix $\sv = (x_1, \ldots, x_n) \in \{0, 1\}^n$ and $\iq = (a_1, \ldots, a_n) \in \{0, 1\}^n$. Define the \emph{$H$-counting function} $\f(p,w_x,w_a):=\subgraphcount(G[\sv, \iq])$ to be the number of copies of $H$ in the query graph for any $\sv$ and $\iq$ such that $p = \inn{\sv, \iq}$, $w_x = \|\sv\|$, and $w_a = \|\iq\|$.
Furthermore, define the \emph{forward difference}
\[
    \pdv{\f}{p}(p, w_x, w_a)
    :=
    \f(p+1, w_x, w_a) - \f(p, w_x, w_a).
\]
\end{definition}

In the following lemma we lower bound the forward difference of $\f$ with respect to $p$. %
As we show subsequently, this bound ensures that even an approximate version of $\f\bparen{\inn{\sv, \iq}, \|\sv\|, \|\iq\|}$ reveals the approximate value of $\inn{\sv, \iq}$
when $\|\sv\|$ and $\|\iq\|$ are publicly known.

We say a triple $(p, w_x, w_a) \in \{0, \ldots, n\}^3$ is \emph{feasible} if there exist $\sv, \iq \in \{0, 1\}^n$ such that $p = \inn{\sv, \iq}, w_x = \|\sv\|$, and $w_a = \|\iq\|$ (i.e., if $w_x+w_a-n \leq p \leq \min\{w_x,w_a\}$).

\begin{lemma}[Forward difference bound]
\label{lem:f-properties}
For every feasible $(p,w_x,w_a)$, if $(p+1, w_x, w_a)$ is also feasible and $n \geq 10k^k$, then $\pdv{\f}{p}(p, w_x, w_a) \geq \frac{9}{10} n^{k-1}$.
\end{lemma}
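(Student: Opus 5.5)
We will prove the bound by a coupling argument: increasing $p$ by one while holding $w_x,w_a$ fixed amounts to moving one node of $U_1$ from $U_1^{10}$ to $U_1^{11}$ and one from $U_1^{01}$ to $U_1^{00}$. This is forced by \eqref{eqn:lb-subpart-sizes}, and feasibility of $(p+1,w_x,w_a)$ guarantees both source classes are nonempty. So fix $\sv,\iq$ realizing $(p,w_x,w_a)$, together with nodes $u\in U_1^{10}$ and $w\in U_1^{01}$. Let $G'$ be the graph obtained by adding all edges from $u$ to $U_3$ and deleting all edges from $w$ to $U_3$. Then $G'$ realizes $(p+1,w_x,w_a)$, and $\pdv{\f}{p}=\subgraphcount(G')-\subgraphcount(G)$. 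We count labeled embeddings $\phi:V(H)\hookrightarrow V(G)$ (so we divide by $|\Aut(H)|$ at the end) and split the change into three pieces.

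\emph{Gained embeddings.} These are embeddings valid in $G'$ but not in $G$. Each must use at least one new edge from $u$ to $U_3$. The dominant family is the ``canonical'' one, with $\phi(1)=u$ and $\phi(s)\in U_s$ for every $s\ge 2$. Under such an embedding, $u$ is adjacent to all of $U_2$ and $U_3$ in $G'$. Every other $H$-edge $\{s,t\}$ with $s,t\ge2$ maps into a complete bipartite pair, and any edge $\{1,t\}$ with $t\ge4$ maps into a complete bipartite pair by definition. So every choice of $\phi(2),\dots,\phi(k)$ works, giving $n^{k-1}$ embeddings. None of them is valid in $G$, because $u\notin N(U_3)$ there. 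The same count applies to each automorphic relabeling, so after dividing by $|\Aut(H)|$ the canonical family contributes at least $n^{k-1}$ unlabeled copies, up to the bookkeeping of which labelings coincide. I would handle this by counting, for each orbit, the image sets, or more simply by counting labeled maps on both sides throughout and comparing.

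\emph{Lost embeddings.} These are embeddings valid in $G$ but not in $G'$. Each must use some edge from $w$ to $U_3$, so $w$ lies in the image. Since $w$ has degree $n$ in $U_3$ only (it has no edges to $U_2$), any such embedding places some vertex of $H$ at $w$ and at least one neighbor of it in $U_3$. Here is the key point. The fixed node $w$ removes one free choice, and the embedding cannot be canonical at node $1$ in the sense above, because $\phi(1)=w$ would require $w\sim U_2$. So the lost embeddings should number $O_k(n^{k-2})$ at most, or at most $(k-1)\cdot k^{k}\, n^{k-2}$ or so. To make this rigorous, assign each vertex of $H$ to the part containing its image and argue that an injective map with one vertex pinned at $w$ whose part-assignment is not the identity loses a factor of $n$. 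More crudely, bound the number of maps as at most $k\cdot k^{k-1}$ part-assignments times $n^{k-2}$ free choices. This requires showing that at least two vertices are non-free, or equivalently that the identity assignment with $\phi(1)=w$ is impossible.

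\emph{Non-canonical gained embeddings.} These are nonnegative, so we drop them. Combining the pieces, $\pdv{\f}{p}\ge n^{k-1}-k^{k}n^{k-2}\ge \tfrac{9}{10}n^{k-1}$ once $n\ge 10k^k$ (possibly after rescaling by $|\Aut(H)|$ consistently on both sides).

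The main obstacle is the lost-embedding bound. A pinned vertex at $w$ with an arbitrary assignment of other $H$-vertices to parts could in principle still have $k-1$ free choices. For example, an $H$-vertex $t\neq 1$ might map to $w$ while the rest form a canonical-looking pattern, giving $\Theta(n^{k-1})$ such maps. My first attempt will be to show that these are in bijection with (or dominated by) analogous gained embeddings through $u$. The reason is that $u$ and $w$ have the same neighborhood in $U_3\cup U_2$ after the swap symmetry: $w$ in $G$ looks like $u$ in $G'$ restricted to $U_3$, and $u$ additionally keeps $U_2$. So every lost embedding through $w$, with $w$ replaced by $u$, becomes a gained embedding in $G'$ that was not present in $G$. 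This injection cancels the lost embeddings against non-canonical gains, leaving the canonical $n^{k-1}$ term plus nonnegative remainder, with the $k^k n^{k-2}$ slack absorbing collisions where $u$ is already in the image.
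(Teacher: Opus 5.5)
Your final paragraph is the paper's argument. The lost copies through both $u$ and $w$ (the paper's $B_1$) are bounded by $k^k n^{k-2}$ because two vertices are pinned. The other lost copies (the paper's $B_2$) inject into the gained copies by replacing $w$ with $u$, and the $n^{k-1}$ canonical copies lie outside this image because each uses an edge from $u$ to $U_2$, while $w$ has no neighbors in $U_2$.

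When you write it up, drop the earlier claim that all lost embeddings number $O_k(n^{k-2})$, which is false, as you noticed. Also note that $w$'s neighborhood contains every $U_t$ with $t\ge 4$ and $\{1,t\}\in E(H)$, not only $U_3$. This is harmless, since $u$ shares those edges and so $N_G(w)\subseteq N_{G'}(u)$ still holds.
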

\begin{proof}[Proof of \Cref{lem:f-properties}]

Let $\sv$ and $\iq$ be such that $p = \inn{\sv, \iq}$, $w_x = \|\sv\|$, and $w_a = \|\iq\|$, and let $G := G[\sv, \iq]$. Since $(p+1, w_x, w_a)$ is feasible, there exist $u^+ := \node{1}{i} \in U_1^{10}$ and $u^{-} := \node{1}{j} \in U_1^{01}$ for some $i, j \in [n]$. Let $G' := G[\sv', \iq']$, where $\iq'$ agrees with $\iq$ except that $a_{i}' = 1$ and $a_{j}' = 0$. Then $\|\iq'\| = \|\iq\|$ and $\inn{\sv, \iq'} = p+1$, so $\subgraphcount(G') = \f(p+1, w_x, w_a)$.%

Let $A$ be the set of copies in $G'$ but not $G$, and let $B$ be the set of 
copies in $G$ but not $G'$. By definition of $G$ and $G'$, every copy in $A$ uses an edge between $u^+$ and $U_3$, and every copy in $B$ uses an edge between $u^-$ and $U_3$. Write the disjoint union $B = B_1 \sqcup B_2$, where $B_1$ is the set of copies in $B$ that contain $u^+$. Then,
\begin{equation*}
\Delta_p \f(p,w_x,w_a) = \subgraphcount(G') - \subgraphcount(G) = |A|-|B| = (|A| - |B_2|) - |B_1|.
\end{equation*}
It suffices to show $|B_1| \leq k^k n^{k-2}$ and $|A|-|B_2| \geq n^{k-1}$, which, combined with $n \geq 10k^k$, proves the lemma.

\textit{Bounding $|B_1|$.}
There are $k$ choices for the nodes in $H$ assigned to each of
$u^-$ and $u^+$ and $(kn)^{k-2}$ choices for 
the remaining $k-2$ nodes in $H$, giving $|B_1| \leq k^2(kn)^{k-2} \leq k^k n^{k-2}$.

\textit{Bounding $|A| - |B_2|$.}
For $c \in B_2$, let $\psi(c)$ be the copy obtained by replacing $u^-$ with $u^+$. 
This gives a copy in $G'$ that is not in $G$, so $\psi$ maps $B_2$ to $A$ and is injective (one can replace $u^+$ back with $u^-$), so $|\psi(B_2)| = |B_2|$. 
There are $n^{k-1}$ copies of $H$ in $A$ %
mapping node $1$ of $H$ to $u^+$ and each node $s \geq 2$ to $U_s$. Each such copy is not in $\psi(B_2)$, as it has an edge between $u^+$ and $U_2$, whereas $u^-$ has no neighbors in $U_2$. This gives $|A| - |B_2| = |A \setminus \psi(B_2)| \geq n^{k-1}$, as desired.
\end{proof}

\subsection{Reduction from Inner Product Queries to Subgraph Counting}
\label{sec:lb-pf}

In this section, we describe and analyze the reduction from inner product queries to subgraph counting, proving \Cref{lem:inn-prod-reduction-new}. 
\Cref{fig:reduction-outline} illustrates the structure of
the reduction in \Cref{alg:inn-prod-reduc-new}, which was described informally in the introduction.
Rather than directly working with the secret dataset $\sv \in \{0, 1\}^n$ and inner product query $\iq^{(\ell)} \in \{0, 1\}^n$, our reduction applies the query graph construction and the $H$-counting function to the concatenations $\svc := \sv \circ (\vec 1 - \sv)$ and $\iqc^{(\ell)} := \iq^{(\ell)} \circ \vec 0$ of length $2n$.
Note that $\|\svc\| = n$ and $\|\iqc^{(\ell)}\| = \|\iq^{(\ell)}\|$ are publicly known (although $\sv$ is private), and $\langle \svc, \iqc^{(\ell)} \rangle = \langle \sv, \iq^{(\ell)} \rangle$. This results in just one private unknown $\langle \svc, \iqc^{(\ell)} \rangle$, which our reduction estimates.

The algorithm accesses the secret vector $\sv$ by making two calls to the randomizers of the \LNDP algorithm $\cA$---one on $G[\svc,\vec 0]$ and one on $G[\svc,\vec 1]$---to obtain the necessary views of nodes to reconstruct an output of $\cA$ on every query graph.

To answer each inner product query $\iq^{(\ell)}$, the reduction %
uses the MixAndMatch subroutine (\Cref{alg:mix-and-match}), which postprocesses the obtained randomizer reports to obtain an estimate $d^{(\ell)} \sim \cA(G[\svc, \iqc^{(\ell)}])$ of the subgraph count on $G[\svc, \iqc^{(\ell)}]$ %
without actually making another call to $\cA$.
Then, using the publicly known weights $\|\svc\| = n$ and $\|\iqc^{(\ell)}\|$, the reduction constructs the function $g_\ell(\cdot) = f_H(\cdot, \|\svc\|, \|\iqc^{(\ell)}\|) = f_H(\cdot, n, \|\iqc^{(\ell)}\|)$, which depends only on the inner product $\langle \svc, \iqc^{(\ell)} \rangle = \inn{\sv, \iq^{(\ell)}}$.
Since $d^{(\ell)}$ is an estimate of the count of copies of $H$, 
the algorithm finds the value $\wh p^{(\ell)}$ minimizing %
$|g_\ell(\wh p^{(\ell)}) - d^{(\ell)}|$, which it %
outputs as an estimate for the inner product. %

\input{figures/reduction-diagram}

We first show that each call to the mix-and-match subroutine produces output that faithfully simulates a call to $\cA$ on the appropriate query graph $G[\svc, \iqc]$ (\Cref{lem:mix-and-match}). We then use this result to analyze the full reduction, in the proof of \Cref{lem:inn-prod-reduction-new}.

\begin{lemma}[Mix--and--match]
\label{lem:mix-and-match}
Consider \Cref{alg:mix-and-match} with parameters $n, H, \cA = (\vec \cR, \cP), \rho$, and inputs $\iq$ and $\nodeview$. 
The returned value $d$ satisfies
 $d \sim \cA(G[\sv, \iq])$.
\end{lemma}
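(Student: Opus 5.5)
The algorithm (\Cref{alg:mix-and-match}) is not displayed above, so I assume its natural form. It takes the cached reports $\nodeview$ and, for each node $v$ of the query graph, selects a report distributed as $\cR_{v,\rho}(\neigh{G[\sv,\iq]}{v})$, with independence across nodes. It then applies $\cP$. The plan is to show that the assembled report vector has exactly the distribution of $\vec\cR_\rho(G[\sv,\iq])$ for the fixed public randomness $\rho$. The conclusion $d\sim\cA(G[\sv,\iq])$ then follows, because $\cA$ is by \Cref{def:lndp} the map $\cP\circ\vec\cR_\rho$.

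\textbf{Step 1: classify nodes.} Using \Cref{def:query-graph-new}, I will determine each node's neighborhood in $G[\sv,\iq]$, and in particular how it depends on $\sv$ and $\iq$.
\begin{enumerate}
\item A node $\node{1}{i}\in U_1$ has neighborhood $(U_2\text{ if }x_i=1)\cup(U_3\text{ if }a_i=1)$, together with the fixed neighbors in parts $U_t$ for other $t$ adjacent to $1$ in $H$. Hence its neighborhood equals its neighborhood in $G[\sv,\vec 0]$ if $a_i=0$, and in $G[\sv,\vec 1]$ if $a_i=1$.
\item A node in $U_2$ has neighborhood $\{\node{1}{i}:x_i=1\}$ together with fixed complete parts. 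This does not depend on $\iq$, so it is the same in both cached graphs.
\item A node in $U_3$ has neighborhood $\{\node{1}{i}:a_i=1\}$ plus fixed parts. This depends only on the public $\iq$.
\item A node in $U_s$ with $s\ge4$ has a fixed neighborhood.
\end{enumerate}
This is exactly property R3, and it can be verified edge by edge from the definition.

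\textbf{Step 2: match reports.} Mix-and-match uses the following rules.
\begin{enumerate}
\item For $\node{1}{i}$, it takes the cached report from the run on $G[\sv,\vec{a_i}]$.
\item For $U_2$ nodes, it takes the cached report from either run, say $G[\sv,\vec0]$.
\item For $U_3$ and $U_{\ge4}$ nodes, whose neighborhoods are computable from public data, it runs $\cR_{v,\rho}$ freshly on the correct neighborhood.
\end{enumerate}
In each case, by Step 1, the report used for $v$ is a sample of $\cR_{v,\rho}$ on $\neigh{G[\sv,\iq]}{v}$.

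\textbf{Step 3: joint distribution (the main subtlety).} Each marginal is correct, but I also need the joint distribution to match. This requires the randomizers in \Cref{def:lndp} to use independent local coins given $\rho$. The cached reports come from two separate invocations, and each node contributes exactly one report, either from one invocation or from fresh coins. The selected reports are therefore mutually independent given $\rho$: different nodes use disjoint local coins, and no node's report is drawn from both runs. This is the point to state carefully, since reusing one node's report across different queries is harmless for a single call but must not introduce dependence within the vector. It follows that the vector is distributed as $\vec\cR_\rho(G[\sv,\iq])$, and applying $\cP$ gives $d$ with the distribution of $\cA(G[\sv,\iq])$ conditioned on $\rho$. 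Since the same $\rho$ is used as in $\cA$, the statement holds both conditionally on $\rho$ and unconditionally, as needed.
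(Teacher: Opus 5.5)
Your proposal is correct and follows essentially the same route as the paper: a part-by-part check that each node's selected report is distributed as $\cR_{v,\rho}$ on its neighborhood in $G[\sv,\iq]$, using the cached run indexed by $a_i$ for $U_1$, a cached run for $U_2$, and fresh runs for $U_s$ with $s\ge 3$, and then applying $\cP$. The differences are cosmetic: the paper's algorithm takes each $U_2$ report from run $a_i$ rather than run $0$, which your observation that $U_2$ neighborhoods do not depend on $\iq$ already covers, and you spell out the cross-node independence of the selected reports, which the paper leaves implicit when it reduces the claim to checking each node's marginal.
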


\input{algs/alg-reduction-polynomial}

\input{algs/mix-and-match-polynomial}

\begin{proof}[Proof of \Cref{lem:mix-and-match}]
By \Cref{def:lndp}, it
suffices to show that, for all nodes $\node{s}{i} \in [kn]$, we have $r(\node{s}{i}) \sim \cR_{\node{s}{i}, \rho}(N_{\node{s}{i}}(G[\sv, \iq]))$, which we show by considering each part $U_s$.

\emph{For $s \in \{1, 2\}$:}
Note that $(a_i \cdot \vec 1)_i = a_i$, so by \Cref{def:query-graph-new}, the neighborhood of node $\node{s}{i}$ is equal in $G_1 := G[\sv, a_i \cdot \vec 1]$ and $G_2 := G[\sv, \iq]$. %
Thus, due to Line~\ref{line:lb-set-ui-new} of \Cref{alg:mix-and-match} and Line~\ref{line:construct-vec-r-new} of \Cref{alg:inn-prod-reduc-new}, %
$$
r(\node{s}{i})
= \nodeview[\node{s}{i}, a_i \cdot \vec 1]
\sim \cR_{\node{s}{i}, \rho} (N_{\node{s}{i}}(G_1))
= \cR_{\node{s}{i}, \rho} (N_{\node{s}{i}}(G_2)).
$$

\emph{For %
$s \in \{3, \ldots, k\}$:}
By construction, for $s \in \{3, \ldots, k\}$, the edge list $N_{\node{s}{i}}(G[\sv, \iq])$ is independent of $\sv$. Therefore, by Line~\ref{line:lb-set-w-new} of \Cref{alg:mix-and-match}, we have
\[r(\node{s}{i}) 
\sim 
\cR_{\node{s}{i}, \rho}(N_{\node{s}{i}}(G[\vec 0, \iq]))
=
\cR_{\node{s}{i}, \rho}(N_{\node{s}{i}}(G[\sv, \iq])) \, . \qedhere\]
\end{proof}

\bigskip
We now complete the analysis of the reduction by proving \Cref{lem:inn-prod-reduction-new}.

\begin{proof}[Proof of \Cref{lem:inn-prod-reduction-new}]

Let $\cA$ be a $(1, \frac{1}{200})$-\LNDP algorithm that achieves the stated error. By \Cref{def:lndp}, it is specified by a distribution $\Phi$ over public randomness strings $\rho$, randomizers $\vec \cR_\pubrand$, %
and postprocessing algorithm $\cP$.
We separately show that algorithm $\cB$ given by \Cref{alg:inn-prod-reduc-new} is $(4, \frac{1}{25})$-DP, and that it is $(\alpha, \frac{1}{100})$-accurate for $\innerproducts(n, q)$.

\emph{(Privacy.)}
The secret vector $\sv$ is accessed on Line~\ref{line:construct-vec-r-new} of \Cref{alg:inn-prod-reduc-new}, where it is queried a total of $|\{\vec 0, \vec 1\}| = 2$ times.
Changing one bit of $\sv$ changes two bits of $\svc$, so for neighboring datasets $\sv, \sv'$, the corresponding query graphs $G[\svc, \vec c], G[\svc', \vec c]$ are at node distance $2$.
Since $\cA$ is $(1, \frac{1}{200})$-\LNDP, applying its node-privacy guarantee twice shows that each output of $\vec \cR_\pubrand(G[\svc, \vec c])$ is $(2, \frac{1+e}{200})$-DP.
By composition (\Cref{lem:basic-comp}) and since $2 \cdot \frac{1+e}{200} \leq \frac{1}{25}$, \Cref{alg:inn-prod-reduc-new} satisfies $(4, \frac{1}{25})$-DP.

\emph{(Accuracy.)}
The following claim establishes the error of a single estimate $\wh p^{(\ell)}$:
\begin{claim}
\label{clm:query-acc}
Fix $\iq \in \!\{\iq^{(1)}\!, \ldots, \iq^{(q)}\}$, and 
let $d \!\sim \!\cA(G[\svc, \iqc])$. As on Line~\ref{line:def-p} of \Cref{alg:inn-prod-reduc-new}, set $\wh p = \arg \min_{p \in \{0, \ldots, \|\iq\|\}} |g(p) - d|$.
Then $\E[|\wh p - \inn{\sv, \iq}\!|] \leq 2 \cdot 10^{-5} \cdot \alpha$.
\end{claim}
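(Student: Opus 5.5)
The plan is to combine the forward difference bound (\Cref{lem:f-properties}) with the error guarantee \eqref{eqn:lem-inn-prod-red-err} of $\cA$, applied to the query graph on $2n$ positions. Write $p^* := \inn{\sv,\iq} = \inn{\svc,\iqc}$ and $g(p) = \f(p, n, \|\iq\|)$, where $\f$ is the $H$-counting function for vectors of length $2n$. The graph $G[\svc,\iqc]$ then has $2kn$ nodes, and $\subgraphcount(G[\svc,\iqc]) = g(p^*)$.

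The first step is to pin down the feasible range of $p$. With $w_x = n$ and $w_a = \|\iq\|$ on length $2n$, the triple is feasible iff $\max\{0, n + \|\iq\| - 2n\} = 0 \le p \le \|\iq\|$. So every $p\in\{0,\ldots,\|\iq\|\}$ is feasible, and $p^*$ lies in this range. Applying \Cref{lem:f-properties} with $2n$ in place of $n$ (valid since $2n \ge 10k^k$ follows from $n\ge 5k^k$) shows that $g$ is strictly increasing on $\{0,\ldots,\|\iq\|\}$. Moreover, for all $p, p'$ in this range,
\[
|g(p)-g(p')| \ge \tfrac{9}{10}(2n)^{k-1}\,|p-p'|.
\]

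The second step is the standard inversion argument. Since $\wh p$ minimizes $|g(p)-d|$ and $p^*$ is a candidate, we have $|g(\wh p)-d| \le |g(p^*)-d|$. Hence
\[
|g(\wh p)-g(p^*)| \le |g(\wh p)-d| + |d-g(p^*)| \le 2|d - g(p^*)|.
\]
Combining this with the previous display gives
\[
|\wh p - p^*| \le \frac{2|d-g(p^*)|}{\tfrac{9}{10}(2n)^{k-1}}.
\]

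Finally, take expectation over $d\sim\cA(G[\svc,\iqc])$ and use \eqref{eqn:lem-inn-prod-red-err}:
\[
\E|\wh p-p^*| \le \frac{2\cdot (2n)^{k-1}\cdot 9\cdot 10^{-6}\,\alpha}{\tfrac{9}{10}(2n)^{k-1}} = 2\cdot 10^{-5}\,\alpha,
\]
which is exactly the claim.

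The only delicate point is bookkeeping. We must check that \Cref{lem:f-properties} is applied at the right length ($2n$, so the factor is $(2n)^{k-1}$, matching \eqref{eqn:lem-inn-prod-red-err}). We must also check that the whole search range $\{0,\ldots,\|\iq\|\}$ consists of feasible triples, so that consecutive differences can be telescoped between $\wh p$ and $p^*$. This is where the padding $\svc = \sv\circ(\vec 1-\sv)$ matters: it makes the lower feasibility constraint vacuous.
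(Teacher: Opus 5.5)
Your proof is correct and follows the paper's argument. You telescope the forward-difference bound of \Cref{lem:f-properties} at length $2n$, use the minimality of $\wh p$ to get $|g(\wh p)-g(p^*)|\le 2|d-g(p^*)|$, and take expectations against \eqref{eqn:lem-inn-prod-red-err}; your explicit check that every $p\in\{0,\ldots,\|\iq\|\}$ is feasible, which justifies the telescoping, is a detail the paper leaves implicit, and you handle it correctly.
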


\begin{proof}[Proof of \Cref{clm:query-acc}]
Let $p = \inn{\sv, \iq}$ and $w_a = \|\iqc\| = \|\iq\|$.
Since $\|\svc\| = n$ and $\inn{\svc, \iqc} = \inn{\sv, \iq}$, we have $g(p) := \f(p, n, w_a) = \subgraphcount(G[\svc, \iqc])$. Let $e_H := d - g(p)$ be the additive error of $\cA$ on $G[\svc, \iqc]$. By summing the forward differences of $\f$ using \Cref{lem:f-properties} (with $2n \geq 10k^k$) and using the definition of $\hat p$, we have
$$
\frac{9}{10} (2n)^{k-1} \cdot |\hat p - p| \leq |g(\hat p) - g(p)| \leq |g(\hat p) - d| + |g(p) - d| \leq 2 |g(p) - d| \leq 2|e_H|.
$$
Rearranging, taking expectations, and using the assumption on $\E[|e_H|]$ in \Cref{eqn:lem-inn-prod-red-err} gives
\[
\E[|\hat p - p|] \leq \frac{20}{9 \cdot (2n)^{k-1}} \E[|e_H|] \leq \frac{20}{9 \cdot (2n)^{k-1}} \cdot (2n)^{k-1} \cdot 9 \cdot 10^{-6} \cdot \alpha \leq 2 \cdot 10^{-5} \cdot \alpha. \qedhere
\]

\end{proof}

To finish the proof of \Cref{lem:inn-prod-reduction-new},
let $X_\ell = |\wh p^{(\ell)} - \inn{\sv, \iq^{(\ell)}}|$ for all $\ell \in [q]$, and 
define $Y = |\{\ell \in [q] : X_\ell > \alpha\}|$. Using \Cref{clm:query-acc} and applying Markov's inequality to each $X_\ell$ and $Y$ gives $\Pr[X_\ell > \alpha] \leq 2 \cdot 10^{-5}$ for all $\ell \in [q]$, so that
\begin{align*}
    \Pr \left[Y > \frac{q}{100} \right] \leq \frac{2 \cdot 10^{-5} \cdot q}{q/100} = 2 \cdot 10^{-3} < \frac{1}{100}.
\end{align*}
That is, with probability at least $\frac{99}{100}$, we have
$\left| \{\ell \in [q] : \left|\wh p^{(\ell)} - \inn{\iq^{(\ell)}, \sv}\right| > \alpha\} \right| \leq \frac{q}{100}$,
so $\cB$ is $(\alpha, \frac{1}{100})$-accurate for $\innerproducts(n, q)$, completing the proof.
\end{proof}

%% file: figures/reduction-diagram.tex
\begin{figure}[ht!]
\centering

\tikzset{every picture/.style={line width=0.75pt}} %

\begin{tikzpicture}[x=0.75pt,y=0.75pt,yscale=-1,xscale=1]

\draw  [color={rgb, 255:red, 0; green, 0; blue, 0 }  ,draw opacity=1 ][fill={rgb, 255:red, 255; green, 255; blue, 255 }  ,fill opacity=1 ] (360,113) -- (449.67,113) -- (449.67,229.8) -- (360,229.8) -- cycle ;
\draw  [color={rgb, 255:red, 0; green, 0; blue, 0 }  ,draw opacity=1 ][fill={rgb, 255:red, 255; green, 255; blue, 255 }  ,fill opacity=1 ] (297.17,136) -- (386.83,136) -- (386.83,252) -- (297.17,252) -- cycle ;
\draw    (93.4,125.48) -- (93.4,216.2) ;
\draw    (93.4,125.6) -- (110.4,125.6) ;
\draw [shift={(113.4,125.6)}, rotate = 180] [fill={rgb, 255:red, 0; green, 0; blue, 0 }  ][line width=0.08]  [draw opacity=0] (8.04,-3.86) -- (0,0) -- (8.04,3.86) -- (5.34,0) -- cycle    ;
\draw    (80.36,173) -- (93.2,173) ;
\draw    (93.8,216.2) -- (110.8,216.2) ;
\draw [shift={(113.8,216.2)}, rotate = 180] [fill={rgb, 255:red, 0; green, 0; blue, 0 }  ][line width=0.08]  [draw opacity=0] (8.04,-3.86) -- (0,0) -- (8.04,3.86) -- (5.34,0) -- cycle    ;
\draw    (166,126.2) -- (187.8,126.2) ;
\draw [shift={(190.8,126.2)}, rotate = 180] [fill={rgb, 255:red, 0; green, 0; blue, 0 }  ][line width=0.08]  [draw opacity=0] (8.04,-3.86) -- (0,0) -- (8.04,3.86) -- (5.34,0) -- cycle    ;
\draw    (167,216.2) -- (188.8,216.2) ;
\draw [shift={(191.8,216.2)}, rotate = 180] [fill={rgb, 255:red, 0; green, 0; blue, 0 }  ][line width=0.08]  [draw opacity=0] (8.04,-3.86) -- (0,0) -- (8.04,3.86) -- (5.34,0) -- cycle    ;
\draw   (193.6,81.6) -- (233.96,81.6) -- (233.96,161.6) -- (193.6,161.6) -- cycle ;
\draw   (193.6,176.6) -- (233.96,176.6) -- (233.96,256.6) -- (193.6,256.6) -- cycle ;
\draw  [color={rgb, 255:red, 0; green, 0; blue, 0 }  ,draw opacity=1 ] (45.56,63.5) -- (456.83,63.5) -- (456.83,275.08) -- (45.56,275.08) -- cycle ;
\draw  [color={rgb, 255:red, 0; green, 0; blue, 0 }  ,draw opacity=1 ][fill={rgb, 255:red, 255; green, 255; blue, 255 }  ,fill opacity=1 ] (270.33,149) -- (360,149) -- (360,266) -- (270.33,266) -- cycle ;
\draw  [color={rgb, 255:red, 0; green, 0; blue, 0 }  ,draw opacity=1 ] (273.83,153.82) -- (356.17,153.82) -- (356.17,185.8) -- (273.83,185.8) -- cycle ;
\draw    (315,186) -- (315,200.5) ;
\draw [shift={(315,203.5)}, rotate = 270] [fill={rgb, 255:red, 0; green, 0; blue, 0 }  ][line width=0.08]  [draw opacity=0] (8.04,-3.86) -- (0,0) -- (8.04,3.86) -- (5.34,0) -- cycle    ;
\draw    (315,227.6) -- (326.69,237.29) ;
\draw [shift={(329,239.2)}, rotate = 219.64] [fill={rgb, 255:red, 0; green, 0; blue, 0 }  ][line width=0.08]  [draw opacity=0] (8.04,-3.86) -- (0,0) -- (8.04,3.86) -- (5.34,0) -- cycle    ;
\draw    (233.96,216.6) .. controls (260.21,216.18) and (243.06,179.01) .. (270.49,176.78) ;
\draw [shift={(273.16,176.68)}, rotate = 180] [fill={rgb, 255:red, 0; green, 0; blue, 0 }  ][line width=0.08]  [draw opacity=0] (8.04,-3.86) -- (0,0) -- (8.04,3.86) -- (5.34,0) -- cycle    ;
\draw    (315.04,93.27) -- (315.16,146) ;
\draw [shift={(315.17,149)}, rotate = 269.87] [fill={rgb, 255:red, 0; green, 0; blue, 0 }  ][line width=0.08]  [draw opacity=0] (8.04,-3.86) -- (0,0) -- (8.04,3.86) -- (5.34,0) -- cycle    ;
\draw    (373.56,252.28) .. controls (373.17,271.6) and (352.9,269.39) .. (351.48,294.51) ;
\draw [shift={(351.4,297.35)}, rotate = 270] [fill={rgb, 255:red, 0; green, 0; blue, 0 }  ][line width=0.08]  [draw opacity=0] (8.04,-3.86) -- (0,0) -- (8.04,3.86) -- (5.34,0) -- cycle    ;
\draw    (338.76,265.88) .. controls (338.44,286.9) and (316.84,272.54) .. (315.11,294.43) ;
\draw [shift={(315,297.35)}, rotate = 270] [fill={rgb, 255:red, 0; green, 0; blue, 0 }  ][line width=0.08]  [draw opacity=0] (8.04,-3.86) -- (0,0) -- (8.04,3.86) -- (5.34,0) -- cycle    ;
\draw    (350.76,92.88) .. controls (350.43,127) and (342.42,95.97) .. (342.02,133.01) ;
\draw [shift={(342,136)}, rotate = 270] [fill={rgb, 255:red, 0; green, 0; blue, 0 }  ][line width=0.08]  [draw opacity=0] (8.04,-3.86) -- (0,0) -- (8.04,3.86) -- (5.34,0) -- cycle    ;
\draw    (423.16,92.48) .. controls (422.78,105.85) and (407.44,91.86) .. (405.22,110.19) ;
\draw [shift={(405,113)}, rotate = 272.44] [fill={rgb, 255:red, 0; green, 0; blue, 0 }  ][line width=0.08]  [draw opacity=0] (8.04,-3.86) -- (0,0) -- (8.04,3.86) -- (5.34,0) -- cycle    ;
\draw    (431.8,229.4) -- (431.8,294.48) ;
\draw [shift={(431.8,297.48)}, rotate = 270] [fill={rgb, 255:red, 0; green, 0; blue, 0 }  ][line width=0.08]  [draw opacity=0] (8.04,-3.86) -- (0,0) -- (8.04,3.86) -- (5.34,0) -- cycle    ;
\draw    (233.96,125.88) .. controls (259.24,125.78) and (242.25,159.75) .. (270.05,162.15) ;
\draw [shift={(272.76,162.28)}, rotate = 180.73] [fill={rgb, 255:red, 0; green, 0; blue, 0 }  ][line width=0.08]  [draw opacity=0] (8.04,-3.86) -- (0,0) -- (8.04,3.86) -- (5.34,0) -- cycle    ;
\draw    (314.92,54.68) -- (314.92,56.94) -- (314.92,73.28) ;
\draw [shift={(314.92,76.28)}, rotate = 270] [fill={rgb, 255:red, 0; green, 0; blue, 0 }  ][line width=0.08]  [draw opacity=0] (8.04,-3.86) -- (0,0) -- (8.04,3.86) -- (5.34,0) -- cycle    ;
\draw    (350.92,54.68) -- (350.92,73.28) ;
\draw [shift={(350.92,76.28)}, rotate = 270] [fill={rgb, 255:red, 0; green, 0; blue, 0 }  ][line width=0.08]  [draw opacity=0] (8.04,-3.86) -- (0,0) -- (8.04,3.86) -- (5.34,0) -- cycle    ;
\draw    (422.92,54.68) -- (422.92,73.28) ;
\draw [shift={(422.92,76.28)}, rotate = 270] [fill={rgb, 255:red, 0; green, 0; blue, 0 }  ][line width=0.08]  [draw opacity=0] (8.04,-3.86) -- (0,0) -- (8.04,3.86) -- (5.34,0) -- cycle    ;
\draw    (29.4,173.4) -- (54.96,173.4) ;
\draw [shift={(57.96,173.4)}, rotate = 180] [fill={rgb, 255:red, 0; green, 0; blue, 0 }  ][line width=0.08]  [draw opacity=0] (8.04,-3.86) -- (0,0) -- (8.04,3.86) -- (5.34,0) -- cycle    ;

\draw (62,166.4) node [anchor=north west][inner sep=0.75pt]    {$\vec{x}_{\circ }$};
\draw (114.4,115.2) node [anchor=north west][inner sep=0.75pt]    {$G[\svc ,\!\vec{0}]$};
\draw (115.4,206.2) node [anchor=north west][inner sep=0.75pt]    {$G[\svc ,\!\vec{1}]$};
\draw (275,157) node [anchor=north west][inner sep=0.75pt]  [font=\scriptsize] [align=center] {Mix-and-Match\\(\Cref{alg:mix-and-match})};
\draw (302.13,207.67) node [anchor=north west][inner sep=0.75pt]  [font=\normalsize]  {$d^{( 1)}$};
\draw (330.47,238.33) node [anchor=north west][inner sep=0.75pt]  [font=\normalsize]  {$\hat{p}^{( 1)}$};
\draw (307.87,297.87) node [anchor=north west][inner sep=0.75pt]  [font=\normalsize]  {$\hat{p}^{( 1)}$};
\draw (343,297.4) node [anchor=north west][inner sep=0.75pt]  [font=\normalsize]  {$\hat{p}^{( 2)}$};
\draw (420.87,297.87) node [anchor=north west][inner sep=0.75pt]  [font=\normalsize]  {$\hat{p}^{(q)}$};
\draw (381.37,300.97) node [anchor=north west][inner sep=0.75pt]  [font=\normalsize]  {$\cdots $};
\draw (307,34.4) node [anchor=north west][inner sep=0.75pt]  [font=\normalsize]  {$\vec{a}^{( 1)}$};
\draw (343,34.4) node [anchor=north west][inner sep=0.75pt]  [font=\normalsize]  {$\vec{a}^{( 2)}$};
\draw (417,34.4) node [anchor=north west][inner sep=0.75pt]  [font=\normalsize]  {$\vec{a}^{( q)}$};
\draw (378,41.4) node [anchor=north west][inner sep=0.75pt]  [font=\normalsize]  {$\cdots $};
\draw (390.07,180.46) node [anchor=north west][inner sep=0.75pt]  [font=\Large,rotate=-330]  {$\cdots $};
\draw (47.56,66.5) node [anchor=north west][inner sep=0.75pt]  [color={rgb, 255:red, 0; green, 0; blue, 0 }  ,opacity=1 ] [align=left] {\Cref{alg:inn-prod-reduc-new}};
\draw (204.6,180) node [anchor=north west][inner sep=0.75pt]    {$\mathcal{R}_{1}$};
\draw (203.8,198.8) node [anchor=north west][inner sep=0.75pt]    {$\mathcal{R}_{2}$};
\draw (196.8,237.2) node [anchor=north west][inner sep=0.75pt]    {$\mathcal{R}_{2kn}$};
\draw (212.6,212.6) node [anchor=north west][inner sep=0.75pt]    {$\vdots $};
\draw (172.6,173) node [anchor=north west][inner sep=0.75pt]    {$\vec{\mathcal{R}}$};
\draw (204.6,85) node [anchor=north west][inner sep=0.75pt]    {$\mathcal{R}_{1}$};
\draw (203.8,103.8) node [anchor=north west][inner sep=0.75pt]    {$\mathcal{R}_{2}$};
\draw (196.8,142.2) node [anchor=north west][inner sep=0.75pt]    {$\mathcal{R}_{2kn}$};
\draw (212.6,117.6) node [anchor=north west][inner sep=0.75pt]    {$\vdots $};
\draw (172.6,81) node [anchor=north west][inner sep=0.75pt]    {$\vec{\mathcal{R}}$};
\draw (307.4,72.2) node [anchor=north west][inner sep=0.75pt]  [font=\normalsize]  {$\vec{a}_{\circ }^{( 1)}$};
\draw (343.4,72.2) node [anchor=north west][inner sep=0.75pt]  [font=\normalsize]  {$\vec{a}_{\circ }^{( 2)}$};
\draw (417.4,72.2) node [anchor=north west][inner sep=0.75pt]  [font=\normalsize]  {$\vec{a}_{\circ }^{( q)}$};
\draw (378.4,79.2) node [anchor=north west][inner sep=0.75pt]  [font=\normalsize]  {$\cdots $};
\draw (16,166.8) node [anchor=north west][inner sep=0.75pt]    {$\vec{x}$};

\end{tikzpicture}

\caption{An outline of the reduction of \Cref{lem:inn-prod-reduction-new}. The secret vector $\sv$ is accessed by the randomizers run on the query graphs $G[\svc, \vec 0]$ and $G[\svc, \vec 1]$; the remaining computations are post-processing.
}
\label{fig:reduction-outline}

\end{figure}

%% file: algs/alg-reduction-polynomial.tex
\begin{algorithm}[ht!]
    \caption{Reduction from inner-product queries to $H$-counting.}
    \label{alg:inn-prod-reduc-new}
    
    \begin{algorithmic}[1]
        \Statex \textbf{Parameters:}
        Dimension $n\in\N$, number of queries $q\in\N$,
        graph $H = ([k], E(H))$, $H$-counting algorithm $\cA = (\vec \cR, \cP)$ for graphs on node set $[2kn]$ using randomizers $\vec{\cR}_\pubrand = (\cR_{v, \pubrand})_{v \in [2kn]}$, public randomness distribution $\distpubrand$,
        and post-processing algorithm $\cP$.
        \Statex \textbf{Input:} Dataset $\sv \in \{0, 1\}^n$, queries $\iq^{(1)}, \ldots, \iq^{(q)} \in \{0, 1\}^n$.
        
        \Statex \textbf{Output:} Estimates $\wh p^{(1)}, \ldots, \wh p^{(q)}$ of $\inn{\sv, \iq^{(1)}}, \ldots, \inn{\sv, \iq^{(q)}}$, respectively.

        \smallskip

        \State Draw and fix $\pubrand\sim \distpubrand$.
        \State Let $\svc \gets \sv \circ (\vec 1 - \sv)$, and $\iqc^{(\ell)} \gets \iq^{(\ell)} \circ \vec 0$ for each $\ell \in [q]$.
        \Comment{\commentstyle{$\|\svc\| = n, \|\iqc^{(\ell)}\| = \|\iq^{(\ell)}\|$ are public.}}
        
        \For{{\bf each} $c\in \set{0,1}$}
            \State $\left(\nodeview[v, c]\right)_{v \in [2kn]} \gets \vec \cR_\pubrand \left(G[\svc, %
            c\cdot \vec 1_{2n}]\right)$.
            \Comment{\commentstyle{Query secret dataset $\sv$.}}
            \label{line:construct-vec-r-new}
        \EndFor
        \State Let $\nodeview$ 
        denote the vector of the $4kn$ outputs produced on Line~\ref{line:construct-vec-r-new}.
        \For{{\bf each} $\ell \in [q]$}
            \State $d^{(\ell)} \gets 
            \mixandmatch \left(2n, H, \cA, \pubrand,
            \iqc^{(\ell)}, \nodeview \right)$.
            \Comment{\commentstyle{$\mixandmatch$ is defined in Alg.~\ref{alg:mix-and-match}.}}
            \State Define $g_\ell(\, \cdot\, ) \gets \f(\, \cdot \, , n, \|\iq^{(\ell)}\|)$.
            \label{line:def-g}
            \State 
            $\wh p^{(\ell)} \gets \arg\min_{p \in \{0, \ldots, \|\iq^{(\ell)}\|\}} \left|g_\ell(p) - d^{(\ell)}\right|$. %
            \label{line:def-p}
        \EndFor
    \State \textbf{return} $\wh p^{(1)}, \ldots, \wh p^{(q)}$.
    \end{algorithmic}
\end{algorithm}

%% file: algs/mix-and-match-polynomial.tex
\begin{algorithm}[ht!]
    \caption{\mixandmatch}
    \label{alg:mix-and-match}
    
    \begin{algorithmic}[1]
        \Statex \textbf{Parameters:} number of nodes $n \in \mathbb{N}$, graph $H = ([k], E(H))$, $H$-counting algorithm $\cA = (\vec \cR, \cP)$ for graphs on node set $[kn]$ using randomizers $\vec{\cR} = (\cR_{i,\rho})_{i \in [kn]}$, post-processing algorithm $\cP$, public randomness $\pubrand$.
        \Statex \textbf{Input:} Query $\iq = (a_1, \ldots, a_n) \in \{0, 1\}^n$, vector $\nodeview$ from randomizers run on $G[\sv, \iq]$.
        
        \Statex \textbf{Output:} Estimate $d$ of $\subgraphcount(G[\sv, \iq])$.

        \smallskip

        \For{{\bf each} $i \in [n]$}
            \For{{\bf each} $s \in \{1, 2\}$}
                \State $r(\node{s}{i}) \gets 
                \nodeview[\node{s}{i}, a_i 
                ]
                $.
                \label{line:lb-set-ui-new}
            \EndFor
            
            \For{{\bf each} $s \in \{3, \ldots, k\}$}
                \State $r(\node{s}{i}) \gets \cR_{\node{s}{i}, \rho} \bparen{N_{\node{s}{i}} (G[\vec 0, \, \iq])}
                $.
                \label{line:lb-set-w-new}
            \EndFor
        \EndFor

        \State \textbf{return} $d \gets \cP \left( \left(r(v)\right)_{v \in [kn]}\right)$.
        \Comment{\commentstyle{$d \sim \cA(G[\sv, \iq])$ by %
        \Cref{lem:mix-and-match}.
        }}
        \label{line:lb-inn-answer-new}
        
    \end{algorithmic}
\end{algorithm}